\colorlet{DarkRed}{red!50!black}
\colorlet{DarkGreen}{green!50!black}
\colorlet{DarkBlue}{blue!50!black}
\newcommand{\DD}[0]{\ensuremath{\mathtt{D}}}
\newcommand{\LL}[0]{\ensuremath{\mathtt{L}}}
\newcommand{\wc}[0]{wc}
\newcommand{\BWT}[0]{\ensuremath{\mathtt{XBWT}}}
\newcommand{\select}[0]{\textit{select}}
\newcommand{\prank}[0]{\textit{p\_rank}}
\newcommand{\rank}[0]{\textit{rank}}
\DeclareMathOperator{\onesop}{\mathtt{ones}}
\newcolumntype{P}[1]{>{\footnotesize\centering\arraybackslash}p{#1}}
\DeclareMathOperator{\E}{E}
\let\epsilon\varepsilon
\newlength{\commentWidth}
\let\oldnl\nl
\newcommand{\nonl}{\renewcommand{\nl}{\let\nl\oldnl}}
\definecolor{orange}{RGB}{235,90,0}
\definecolor{darkorange}{RGB}{175,30,0}
\definecolor{turkis}{RGB}{131,182,182}
\definecolor{darkturkis}{RGB}{31,82,82}
\definecolor{green}{RGB}{102,180,0}
\definecolor{darkgreen}{RGB}{51,90,0}
\definecolor{myblue}{RGB}{0,0,213}
\definecolor{mydarkblue}{RGB}{0,0,100}
\definecolor{mybrightblue}{HTML}{74B0E4}
\definecolor{mybrighterblue}{HTML}{B3EAFA}
\definecolor{lila}{RGB}{102,0,102}
\definecolor{darkred}{RGB}{139,0,0}
\definecolor{darkyellow}{RGB}{188,135,2}
\definecolor{brightgray}{RGB}{200,200,200}
\definecolor{darkgray}{RGB}{50,50,50}
\definecolor{amaranth}{rgb}{0.9, 0.17, 0.31}
\definecolor{alizarin}{rgb}{0.82, 0.1, 0.26}
\definecolor{amber}{rgb}{1.0, 0.75, 0.0}
\definecolor{green(ryb)}{rgb}{0.4, 0.69, 0.2}
\definecolor{hanblue}{rgb}{0.27, 0.42, 0.81}
\definecolor{grannysmithapple}{rgb}{0.66, 0.89, 0.63}
\newtheorem{theorem}{Theorem}[section]
\newtheorem{lemma}[theorem]{Lemma}
\newtheorem{definition}[theorem]{Definition}
\newtheorem{corollary}[theorem]{Corollary}
\newenvironment{acknowledgements}{%
  \begin{abstract}
}{%
  \end{abstract}
}
\title{Indexing Tries within Entropy-Bounded Space}
\author{
 	Lorenzo Carfagna\thanks{Department of Computer Science, University of Pisa, Italy} \\ \texttt{lorenzo.carfagna@phd.unipi.it} \and
    Carlo Tosoni\thanks{DAIS, Ca' Foscari University of Venice, Italy} \\
    \texttt{carlo.tosoni@unive.it}
}
\date{}
\begin{document}
\maketitle

\begin{abstract}
We study the problem of indexing and compressing tries using a BWT-based approach.
Specifically, we consider a succinct and compressed representation of the XBWT of Ferragina et al.\ [FOCS '05, JACM '09] corresponding to the analogous of the FM-index [FOCS '00, JACM '05] for tries.
This representation allows to efficiently count the number of nodes reached by a given string pattern.
To analyze the space complexity of the above trie index, we propose a proof for the combinatorial problem of counting the number of tries with a given symbol distribution.
We use this formula to define a worst-case entropy measure for tries, as well as a notion of $k$-th order empirical entropy.
In particular, we show that the relationships between these two entropy measures are similar to those between the corresponding well-known measures for strings.
We use these measures to prove that the XBWT of a trie can be encoded within a space bounded by our $k$-th order empirical entropy plus a $o(n)$ term, with $n$ being the number of nodes in the trie.
Notably, as happens for strings, this space bound can be reached for every sufficiently small $k$ simultaneously.
Finally, we compare the space complexity of the above index with that of the $r$-index for tries proposed by Prezza [SODA '21] and we prove that in some cases the FM-index for tries is asymptotically smaller.
\end{abstract}

\begin{acknowledgements}
We would like to thank Sung-Hwan Kim for remarkable suggestions concerning the combinatorial problem addressed in the article.

\emph{Lorenzo Carfagna}: partially funded by the INdAM-GNCS Project CUP \\E53C24001950001, 
and by the PNRR ECS00000017 Tuscany Health Ecosystem, Spoke 6, CUP I53C22000780001, funded by the NextGeneration EU programme.

\emph{Carlo Tosoni}: funded by the European Union (ERC, REGINDEX, 101039208). Views and opinions expressed are however those of the authors only and do not necessarily reflect those of the European Union or the European Research Council Executive Agency. Neither the European Union nor the granting authority can be held responsible for them.
\end{acknowledgements}

\thispagestyle{empty}
\pagebreak

\section{Introduction}

The Burrows-Wheeler transform~\cite{BWT} (BWT) is a renowned reversible string permutation which rearranges the characters based on the lexicographic order of the suffixes following them.
This transform enhances the compressibility of the input string and at the same time it allows to efficiently answer \emph{count queries}, i.e., to return the number of occurrences of a pattern in the original string~\cite{FMindex}.
Later, extensions of the BWT were developed for more complex objects such as ordered labeled trees~\cite{xBWTjournal}, de Bruijn graphs~\cite{deBruijnBWT}, and finite-state automata~\cite{WheelerGraphs, BWTnfas}.
All these extensions rely on the key idea of sorting the nodes according to the order of their incoming strings to achieve compression and to support indexing features such as count queries.
In particular, the original BWT for strings is known to be correlated with the notion of \emph{empirical entropy}~\cite{BWTanalysis}, since the BWT can be compressed in a space close to the $k$-th order empirical entropy of the input string. 
This result led to the development of some of the most famous compressed string indices, whose space occupation is proportional to the $k$-th order entropy of the string~\cite{FMindex,FMI-Journal,FullTextIndexes,FullTextIndexesSurvey}.
The importance of these results stems from the fact that in some applications, such as bioinformatics, strings tend to be particularly repetitive and consequently the $k$-th order entropy becomes low.
A similar connection between the BWT and a different notion of $k$-th order entropy is also known for labeled ordered trees.
Indeed, the XBWT of Ferragina et al.~\cite{xBWTconf, xBWTjournal} linearises the node labels of the input tree, which can then be compressed to the $k$-th order label entropy.
In this case, the $k$-length context of a label is defined as the $k$-length string entering into the corresponding tree node.
However, to reconstruct the input tree it is also necessary to store the tree topology separately.
At this point, it is natural to ask whether similar results can be established also for other types of objects.
In particular, in this paper we study this problem for the more specific case of tries, and consequently we aim to relate the XBWT of a trie with some notion of trie entropy.
For tries there already exists a notion of worst-case entropy $\mathcal{C}(n,\sigma)$~\cite{RepresentingTrees,RRR} equal to $\log_2 \mathcal{S}(n,\sigma)$, where $\mathcal{S}(n,\sigma)=\frac{1}{n}\binom{\sigma n}{n-1}$ is the number of tries with $n$ nodes over an alphabet $\Sigma$ of size $\sigma$~\cite{concreteMathematics}.
In the literature, there are many trie representations whose space usage is expressed in terms of $\mathcal{C}(n,\sigma)$.
For instance, Benoit et al.\ proposed a trie representation taking $\mathcal{C}(n,\sigma) + \Omega(n)$ bits~\cite{RepresentingTrees}.
This space occupation was later improved to $\mathcal{C}(n, \sigma) + o(n) + O(\log \log \sigma)$ bits by Raman et al.~\cite{RRR} and Farzan et al.~\cite{UniversalSuccinctTree?}
All these data structures support queries on the trie topology plus the cardinal query of retrieving the child labeled with the $i$-th character of the alphabet in constant time.
However, these data structures do not support count queries, which in the context of tries corresponds to counting the number of nodes reached by an input string pattern.
For the sake of completeness, we also mention solutions proposing dynamic representation of tries with space complexity $\mathcal{C}(n,\sigma) + \Omega(n)$~\cite{ArroyueloDynamicTries, Bonsai, DavoodiDynamicTries}.

Similarly to what has been done for strings, in this paper we aim to refine the above approach, and develop an entropy formula, and subsequently a trie index, which contrary to $\mathcal{C}(n,\sigma)$ takes into account also the distribution of the edge labels.

\paragraph{Our contribution.} Motivated by the above reasons, in this paper we study the combinatorial problem of finding the number $\lvert \mathcal{U} \rvert$ of tries given an input \emph{symbol distribution} $\{n_c\mid c\in \Sigma\}$, that is tries having $n_c$ edges labeled with symbol $c$.
During a bibliographic search, we found a recent technical report~\cite{TRProdinger} showing $\lvert \mathcal{U}\rvert = \frac{1}{n}\prod_{c \in \Sigma}\binom{n}{n_c}$, with $n$ being the number of nodes, by sketching a proof based on generating functions~\cite{analyticCombinatorics}.
However, in this paper we prove this closed formula by showing a simple bijection between these tries and a particular class of binary matrices.
As an immediate consequence, we obtain a worst-case entropy $\mathcal{H}^{\wc}$ for this refined class of tries.
In addition, we define a notion of $k$-th order empirical entropy $\mathcal{H}_k$ for tries.
Similarly to the label entropy of the XBWT~\cite{xBWTconf,xBWTjournal}, our $\mathcal{H}_k$ divides the nodes based on their $k$-length incoming string, but it also encodes the topology of the trie.
Moreover, we show that $\mathcal{H}^{\wc}$ and $\mathcal{H}_{k}$ share similar properties of their corresponding string counterparts.
In particular, it is $n\mathcal{H}_0=\mathcal{H}^{\wc}+O(\sigma \log n)$ and for every $k \geq 0$ it holds that $\mathcal{H}_{k+1} \leq \mathcal{H}_{k}$.
Next, we show the existence of an XBWT representation for tries which, analogously to the FM-index~\cite{FMindex, FMI-Journal} for strings, efficiently supports count queries within $n\mathcal{H}_k + o(n)$ bits of space for every $k = o(\log_\sigma n)$ simultaneously assuming $\sigma \leq n^{\varepsilon}$ for some constant $0< \varepsilon < 1$.
In the case of large alphabets, this representation coincides with the solution proposed by Kosolobov,  Sivukhin~\cite[Lemma 6]{Kosolobov2019}, and Belazzougui~\cite[Theorem 2]{Belazzougui2010} for representing the trie underlying the Aho-Corasick automaton~\cite{AC75}.
For small alphabets, we propose an alternative representation to speed up the execution time of count queries.
While previous works~\cite{Belazzougui2010,Hon2013,Kosolobov2019} analyze the space complexity of this data structure using the label entropy of Ferragina et al.~\cite{xBWTconf}, we give a space analysis based on our entropy measure for tries.
Moreover, we show that if for every $n_c$ it holds that $n_c \leq n /2$ then this representation is \emph{succinct}, i.e., it takes $\mathcal{H}^{\wc}+o(\mathcal{H}^{\wc})$ bits of space.
Finally, we compare the above index with the (trie) $r$-index of Prezza~\cite{rindexTrie} which can be stored within $O(r \log n) + o(n)$ bits of space, with $r$ being the number of runs in the XBWT of the trie.
We show that, due to~\cite[Theorem 3.1]{rindexTrie} for every trie and integer $k \geq 0$ it holds that $r \leq n\mathcal{H}_k + \sigma^{k+1}$.
Moreover, we prove that there exists an infinite family of tries for which $r = \Theta(n\mathcal{H}_0) = \Theta(n)$ holds.
This demonstrates that the FM-index for tries can be asymptotically smaller with respect to the $r$-index.

\section{Notation}\label{sec: Notation}
In the following we refer with $\Sigma$ a finite alphabet of size $\sigma$, totally ordered according to a given relation $\preceq$.
For every integer $k\geq 0$ we define $\Sigma^k$ as the set of all length-$k$ strings with symbols in $\Sigma$ where $\Sigma^0=\{\epsilon\}$, i.e., the singleton set containing the empty string $\epsilon$.
Moreover, we denote by $\Sigma^*$ the set of all finite strings over the alphabet $\Sigma$, i.e., the union of all $\Sigma^k$ for every $k\geq 0$.
We extend $\preceq$ \emph{co-lexicographically} to the set $\Sigma^*$, i.e., by comparing the  characters from right to left and considering $\epsilon$ as the smallest string in $\Sigma^*$.
With $[n]$ we denote the set $\{1,..,n\}\subseteq \mathbb{N}$ and given a set $X$, $\lvert X \rvert$ is the cardinality of $X$.
Unless otherwise specified, all logarithms are base $2$ and denoted by $\log x$, furthermore we assume 
that $0\log(x/0)=0$ for every $x \geq 0$.
Given a matrix $M$ we denote the $i$-th row/column of $M$ as $M[i][-]$ and $M[-][i]$, respectively.
In the following, we denote with $\onesop(M')$ the number of entries equal to $1$ in a submatrix $M'$ of $M$.
We work in the RAM model, where we assume that the word size is $\Theta(\log n)$, where $n$ is the input dimension.

In this paper, we work with \emph{cardinal trees}, also termed tries, i.e., edge-labeled ordered trees in which (i)~the labels of the edges outgoing from the same node are distinct, and (ii)~sibling nodes are ordered by their incoming label.
We denote by $\mathcal{T} = (V,E)$ a trie over an alphabet $\Sigma$, where $V$ is set of nodes with $\lvert V\rvert=n$ and $E$ is the set of edges with labels drawn from $\Sigma$.
We denote with $n_i$ the number of edges labeled by $i$-th character $c_i \in \Sigma$ according to $\preceq$.
Note that since $\mathcal{T}$ is a tree we have that $\sum_{i=1}^{\sigma} n_i = \lvert E \rvert= n-1$.
We denote by $\lambda(u)$ the label of the incoming edge of a node $u \in V$. 
If $u$ is the root, we define $\lambda(u) = \#$, where $\#$ is the smallest character of $\Sigma$ according to $\preceq$ not labeling any edge.
Given a node $u \in V$, the function $\pi(u)$ returns the parent node of $u$ in $\mathcal{T}$, where $\pi(u) = u$ if $u$ is the root.
Furthermore, the function $out(u)$ returns the set of labels of the edges outgoing from node $u$, i.e., $out(u) = \{c \in \Sigma: (u,v,c) \in E\; \text{for some}\; v \in V \}$.


Given a bitvector $B$ of size $m$ with $n$ entries equal to $1$, in the following we denote with $\rank(i,B)$ the number of $1$s in $B$ up to position $i\leq m$ (included), and with $\select(i,B)$ the position of the $i$-th occurrence of $1$ in $B$ where $1\leq i\leq n$.
The \emph{indexable dictionary} (ID) problem consists of representing $B$ and supporting \emph{select} and \emph{partial rank} queries in constant time.
In particular, given an integer $i \in [m]$, a partial rank query $\prank(i, B)$ returns $-1$ if $B[i] = 0$ and $\rank(i,B)$ otherwise.
Note that both rank and partial rank queries can be used to determine if $B[i]=1$.
On the other hand, if a representation supports (full) rank and select queries on both $B$ and its complementary bitvector $\bar B$, then it is called a \emph{fully} indexable dictionary (FID) representation.
We note that given an ID representation of a bitvector $B$ of size $m$ we can perform rank queries $\rank(i,B)$ in $O(\log m)$ time when $B[i]=0$ by performing a binary search using $\select$ queries.
In this paper, we use the ID and FID representations proposed by Raman et al.~\cite{RRR} that are summarized in the following lemma.

\begin{lemma}\cite[Lemma 4.1 and Theorem 4.6]{RRR}\label{thm:RRR}
Given an arbitrary bitvector $B$ of size $m$ with $n$ ones, there exist:
\begin{enumerate}
    \item\label{thm:RRR property 1} a FID representation of $B$ taking $\lceil \log \binom{m}{n}\rceil + O(m \log\log m / \log m)$ bits.
    \item\label{thm:RRR property 2} an ID representation of $B$ taking $\lceil \log \binom{m}{n}\rceil + o(n) + O(\log \log m)$ bits.
\end{enumerate}
\end{lemma}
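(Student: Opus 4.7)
Because this is the classical Raman--Raman--Rao construction, my plan is to reconstruct their two-level block decomposition and verify that the space accounting closes. Fix a block length $t=\lfloor(\log m)/2\rfloor$ and partition $B$ into $b=\lceil m/t\rceil$ consecutive blocks $B_1,\dots,B_b$. For each block $B_i$, compute its \emph{class} $c_i:=\onesop(B_i)\in\{0,\dots,t\}$ and its \emph{offset} $o_i$, the rank of $B_i$ inside a canonical enumeration of all length-$t$ bitstrings of popcount $c_i$. I would store the class sequence $(c_1,\dots,c_b)$ in a fixed-width array using $b\lceil\log(t+1)\rceil=O(m\log\log m/\log m)$ bits, and concatenate the offsets into a variable-width bit array $O$ where block $i$ occupies $\lceil\log\binom{t}{c_i}\rceil$ bits. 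The in-block primitives (decoding, intra-block rank and select) are then implemented by precomputed lookup tables of size $2^t\cdot\poly(t)=o(m)$ bits.

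The key space calculation is the bound $\sum_{i=1}^b\log\binom{t}{c_i}\le\log\binom{m}{n}$. This follows from Vandermonde's identity, since $\sum_{c_1+\cdots+c_b=n}\prod_i\binom{t}{c_i}=\binom{bt}{n}$, hence a single term of the sum (namely our $\prod_i\binom{t}{c_i}$) is at most $\binom{m}{n}$. Adding back the rounding overhead contributes at most $b=O(m/\log m)$ further bits, so the offset array uses $\lceil\log\binom{m}{n}\rceil+O(m/\log m)$ bits. Together with the class array this already gives the main term $\lceil\log\binom{m}{n}\rceil+O(m\log\log m/\log m)$ claimed in item~\ref{thm:RRR property 1}.

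To support constant-time queries in the FID case, I would add a two-level superblock index. Group blocks into \emph{superblocks} of $s:=\Theta(\log m)$ blocks each, store at each superblock boundary the cumulative rank of $1$s and the cumulative bit offset into $O$; relative ranks and pointers inside a superblock then fit in $O(\log\log m)$ bits. Select queries are handled symmetrically by a sparser index that stores the position of every $(\log^2 m)$-th one and recursively narrows inside a bucket, using a standard four-Russians argument to reach $O(1)$ time. All these auxiliary structures fit in $O(m\log\log m/\log m)$ bits, matching item~\ref{thm:RRR property 1}.

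For item~\ref{thm:RRR property 2} the obligation is weaker, \select\ and \prank\ only, which lets us drop the per-block class array (the main $O(m\log\log m/\log m)$ contributor). My approach here is to represent the $1$-positions themselves: group them into superblocks of $\Theta(\log^2 n)$ consecutive ones, store an absolute position per superblock, and inside each superblock encode the positions via a second-level RRR-type offset array whose size is governed by $\log\binom{m'}{n'}$ for the local parameters $m',n'$. Partial rank then reduces to checking whether a queried position is a stored one-position, which can be done via a perfect hash of size $O(n)$ bits (or a Patricia-style comparison) together with the superblock identifier. A careful concavity/splitting argument on $\log\binom{\cdot}{\cdot}$ shows the two-level offsets still sum to $\lceil\log\binom{m}{n}\rceil+o(n)$ bits, with an additive $O(\log\log m)$ for the top-level pointers. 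The main obstacle I anticipate is precisely this last step: bounding the space of the recursive ID without blowing up the $o(n)$ slack, which is why RRR cannot quite achieve the FID space for ID on very sparse instances and must use select-driven navigation rather than per-block class storage.
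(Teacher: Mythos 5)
First, a point of order: the paper does not prove this lemma at all --- it is imported verbatim from Raman, Raman and Satti (their Lemma~4.1 and Theorem~4.6) and used as a black box, so there is no in-paper argument to compare yours against; your reconstruction can only be judged against the original construction. For item~\ref{thm:RRR property 1} your argument is essentially the correct classical one: blocks of length $t=\lfloor(\log m)/2\rfloor$, class/offset encoding, the Vandermonde bound on $\sum_i\log\binom{t}{c_i}$, four-Russians tables for in-block queries, and a two-level directory for rank plus a sampled structure for select. One small slip: since $bt$ may exceed $m$, Vandermonde literally gives $\prod_i\binom{t}{c_i}\le\binom{bt}{n}$, and $\binom{bt}{n}\ge\binom{m}{n}$, which is the wrong direction; you should either let the last block be short so the block lengths sum to exactly $m$, or observe that $\log\binom{bt}{n}-\log\binom{m}{n}=O(\log^2 m)$ is absorbed by the redundancy term. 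Either fix is routine.

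Item~\ref{thm:RRR property 2} is where there is a genuine gap. A perfect hash function over $n$ keys costs $\Theta(n)$ bits (a minimal one needs about $n\log e$ bits), so the ``perfect hash of size $O(n)$ bits'' you invoke to answer partial rank already violates the claimed bound $\lceil\log\binom{m}{n}\rceil+o(n)+O(\log\log m)$: when $n=\Theta(m)$ the leading term $\log\binom{m}{n}$ is itself only $\Theta(n)$, so an extra $\Theta(n)$ bits is a constant-factor blowup rather than lower-order slack. The actual Theorem~4.6 of Raman et al.\ does not bolt a separate dictionary on top of the offset encoding; it partitions the universe into buckets, represents the keys of each bucket so that navigating the representation itself yields membership and the local rank of a queried element, and charges all pointer and rounding overheads to the $o(n)$ term. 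As written, your sketch for the ID neither closes this space accounting nor explains how partial rank is answered in constant time after you have discarded the per-block class directory, so this half of the lemma remains unproved.
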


\section{On the number of tries with a given symbol distribution}

This section is dedicated to prove the following result.

\begin{theorem}\label{theorem number of tries}
Let $\mathcal{U}$ be the set of tries with $n$ nodes and labels drawn from an alphabet $\Sigma=\{c_1,..,c_\sigma\}$, where each symbol $c_i$ labels $n_i$ edges, then $|\mathcal{U}| = \frac{1}{n}\prod_{i=1}^{\sigma} \binom{n}{n_i}$
\end{theorem}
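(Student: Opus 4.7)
The plan is to establish a bijection between $\mathcal{U}$ and a specific family of $n\times\sigma$ binary matrices, and then count those matrices via a cyclic-rotation argument in the spirit of the classical Cycle Lemma.

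\emph{From tries to matrices.} Given a trie $\mathcal{T}\in\mathcal{U}$, list its $n$ nodes as $v_1,\ldots,v_n$ in depth-first preorder, with $v_1$ the root and siblings visited in the order induced by $\preceq$ on their incoming edge labels. Build the $n\times\sigma$ binary matrix $M$ with $M[i][j]=1$ iff $v_i$ has an outgoing edge labeled $c_j$, and write $d_i=\sum_j M[i][j]$ for the out-degree of $v_i$. Two properties are immediate: (i) column $j$ of $M$ contains exactly $n_j$ ones; (ii) the partial sums satisfy $\sum_{i=1}^{k} d_i \geq k$ for every $k<n$. Condition (ii) follows because the DFS stack after popping $v_k$ has size $1+\sum_{i\leq k}d_i-k$, which must remain positive for $k<n$ and become $0$ at $k=n$. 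Call such a matrix \emph{valid}. Conversely, any valid matrix reconstructs a unique trie in $\mathcal{U}$ by replaying the traversal, since each row prescribes both the out-degree and the ordered labels of the next children to attach. This yields a bijection between $\mathcal{U}$ and the set of valid matrices.

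\emph{Counting valid matrices via rotations.} Let $\mathcal{M}$ denote the set of all $n\times\sigma$ binary matrices whose $j$-th column has $n_j$ ones; clearly $|\mathcal{M}|=\prod_{j=1}^{\sigma}\binom{n}{n_j}$. The cyclic group $\ZZ/n\ZZ$ acts on $\mathcal{M}$ by rotating rows. By the Cycle Lemma (Dvoretzky--Motzkin), for any non-negative integer sequence $(d_1,\ldots,d_n)$ with $\sum_i d_i=n-1$, exactly one of its $n$ cyclic rotations satisfies (ii); hence every orbit of $\mathcal{M}$ contains exactly one valid matrix.

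\emph{Freeness and conclusion.} The rotation action is free: if $(d_1,\ldots,d_n)$ had period $p\mid n$ with $p<n$, its sum would equal $(n/p)\cdot s$ for some integer $s$, and $s\cdot n/p=n-1$ forces $n\mid (n-1)p$; since $\gcd(n,n-1)=1$ we get $n\mid p$, contradicting $p<n$. Therefore each orbit has size exactly $n$, and the number of valid matrices is $|\mathcal{M}|/n=\tfrac{1}{n}\prod_{i=1}^{\sigma}\binom{n}{n_i}$, which by the bijection equals $|\mathcal{U}|$.

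\emph{Main obstacle.} The most delicate step is invoking the Cycle Lemma in the right form. Its standard statement concerns \emph{strictly positive} partial sums of a sequence summing to $1$, whereas condition (ii) is a \emph{weak} inequality on a sequence summing to $n-1$. The fix is a reversal-and-translation substitution: setting $a_i:=1-d_{n+1-i}$ gives $a_i\leq 1$ and $\sum_i a_i=1$, and a rotation of $(d_i)$ satisfies (ii) iff the corresponding rotation of $(a_i)$ has all partial sums strictly positive, matching the standard hypothesis.
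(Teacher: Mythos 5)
Your proof is correct and follows essentially the same route as the paper: a bijection between $\mathcal{U}$ and binary matrices with prescribed row/column sums whose out-degree partial sums satisfy a ballot-type (Łukasiewicz) condition, followed by a cyclic-rotation/Cycle-Lemma argument showing each orbit of size $n$ contains exactly one valid matrix. The only cosmetic differences are that you transpose the matrix, prove the freeness of the rotation action directly via the $\gcd(n,n-1)=1$ period argument, and explicitly reduce to the strict form of Dvoretzky--Motzkin, whereas the paper cites a packaged lemma of Rote that delivers both the distinctness of rotations and the uniqueness of the good one.
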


The same problem has already been addressed by Prezza (see Section~3.1 of~\cite{rindexTrie}) and Prodinger~\cite{TRProdinger}.
Prezza claimed that $\lvert \mathcal{U} \rvert$ is equal to $\prod_{i=1}^{\sigma}\binom{n}{n_i}$. However, we point out that this formula overestimates the correct number of tries by a multiplicative factor $n$.
On the other hand, Prodinger deduced the correct number in a technical report using the Lagrange Inversion Theorem~\cite{analyticCombinatorics}.
In this paper, we give an alternative proof using a simple bijection between these tries and a class of binary matrices. 
For the more general problem of counting tries with $n$ nodes and $\sigma$ characters, the corresponding formula $\mathcal{S}(n,\sigma)$ is well-known and in particular it is $\mathcal{S}(n,\sigma)=\frac{1}{n}\binom{n\sigma}{n-1}$~\cite{concreteMathematics, cycleLemma}. 
We now introduce the definition of set $\mathcal{M}$, which depends on $\mathcal{U}$.

\begin{definition}[Set $\mathcal{M}$]\label{Set M}
    We define $\mathcal{M}$ as the set of all $\sigma \times n$ binary matrices $M$, such that $\onesop(M[i][-]) = n_i$ for every $i \in [\sigma]$.
\end{definition}

Note that $\onesop(M) = n - 1$, trivially follows from $\sum_{i=1}^{\sigma} n_i = n-1$.
In order to count the elements in the set $\mathcal U$, we now define a function $f$ mapping each trie $\mathcal{T}$ into an element of $M \in \mathcal{M}$.
According to our function $f$, the number of ones in the columns of $M$ encodes the topology of the trie, while the fact that $M$ is binary encodes the standard trie labeling constraint. 

\begin{definition}[Function $f$]
    We define the function $f : \mathcal{U} \rightarrow \mathcal{M}$ as follows: given a trie $\mathcal{T} \in \mathcal U$ consider its nodes $u_{1}, \ldots, u_{n}$ in pre-order visit.
    Then $M = f(\mathcal{T})$ is the $\sigma \times n$ binary matrix such that $M[i][j] = 1$ if and only if $c_{i} \in out(u_{j})$.
\end{definition}

Observe that, given a trie $\mathcal{T}$, the $i$-th column $M[-][i]$ of the corresponding matrix $M=f(\mathcal{T})$ is the characteristic bitvector of the outgoing labels of node $u_i$ (under the specific ordering on $\Sigma$) and in particular it holds that $\onesop(M[-][i])$ is the out-degree of $u_i$.
It is easy to observe that the function $f$ is injective, however, $f$ is not surjective in general: some matrices $M \in \mathcal{M}$ may not belong to the image of $f$ because during the inversion, connectivity constraints could be violated, as shown in Figure~\ref{fig1}.
To characterize the image of $f$, we define the following two sequences.

\begin{definition}[Sequences $\DD$ and $\LL$]\label{arrays D and L}
    For every $M \in \mathcal{M}$ we define the integer sequences $\DD$ and $\LL$ of length $n$, such that $\DD[i] = \onesop(M[-][i]) - 1$ and $\LL[i] = \sum_{j = 1}^{i} \DD[j]$ for every $i \in [n]$.
\end{definition}

We can observe that if $M \in f(\mathcal{U})$, then if we consider the unique trie $\mathcal{T}=f^{-1}(M)$, $\DD$ is the out-degree of the $i$-th node in pre-order visit minus one, namely $\DD[i] = \rvert out(u_{i})\lvert - 1$.
Consequently, we have that $\LL[i] = \sum_{j = 1}^{i}\lvert out(u_j) \rvert - i$, and therefore $\LL[i] + 1$ denotes the total number of ``pending'' edges immediately after the pre-order visit of the node $u_i$, where the $+1$ term stems from the fact that the root has no incoming edges.
In other words, $\LL[i] + 1$ denotes the number of edges whose source has already been visited, while their destination has not.
We note that for every $M \in \mathcal{M}$, it holds that $\LL[n] = -1$, since $\onesop(M) = n - 1$.
In the following, we show that the sequence $\LL$ can be used to determine if $M \in f(\mathcal{U})$ holds for a given matrix $M \in \mathcal{M}$.
In fact, we will see that if $M \in f(\mathcal{U})$, the array $\LL$ is a so-called \emph{Łukasiewicz path} as defined in the book of Flajolet and Sedgewick~\cite{analyticCombinatorics} (see Section~I. 5. ``Tree Structures'').
Specifically, a Łukasiewicz path $\mathcal{L}$ is a sequence of $n$ integers satisfying the following conditions.
(i)~$\mathcal{L}[n] = -1$, and for every $i \in [n-1]$, it holds that (ii)~$\mathcal{L}[i]\geq 0$, and (iii)~$\mathcal{L}[i+1] - \mathcal{L}[i] \geq -1$.
Łukasiewicz paths can be used to encode the topology of a trie since it is known that their are in bijection with unlabeled ordered trees of $n$ nodes~\cite{analyticCombinatorics}.
Specifically, given an ordered unlabeled tree $\mathcal{T}'$ the $i$-th point in its corresponding Łukasiewicz path $\mathcal{L}$ is $\mathcal{L}[i]= \sum_{j = 1}^{i} (\lvert out(u_j)\rvert -1)$, i.e., $\mathcal{L}$ is obtained by prefix-summing the sequence formed by the nodes out-degree minus $1$ in pre-order.
The reverse process consists in scanning $\mathcal{L}$ left-to-right and appending the current node $u_i$ of out-degree $\mathcal{L}[i]-\mathcal{L}[i-1]+1$ to the deepest pending edge on the left; obviously $u_1$ is the root of $\mathcal{T}'$ where we assume $\mathcal{L}[0]=0$.
The next result states that the tries with a fixed number of occurrences of symbols are in bijection with the matrices of $\mathcal{M}$ whose corresponding array $\LL$ is a \emph{Łukasiewicz path}.
Similar results have been proved in other articles~\cite{aProblemOfArrangements, cycleLemma, roteCountingDegrees}.

\begin{restatable}[]{lemma}{lemmai}\label{characterizing f}
A binary matrix $M \in \mathcal{M}$ belongs to $f(\mathcal{U})$ if and only if, the corresponding array $\LL$ is a Łukasiewicz path.
\end{restatable}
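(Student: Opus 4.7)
The plan is to prove the two directions of the equivalence separately. The forward direction will follow from a direct unpacking of the definition of $\LL$ combined with the interpretation of $\LL[i]+1$ as the number of pending edges after visiting the first $i$ nodes in pre-order. The reverse direction will use the bijection, already recalled in the excerpt, between Łukasiewicz paths and unlabeled ordered trees: I will recover the topology from $\LL$ and then use the columns of $M$ to label its edges, checking that this indeed produces a trie in $\mathcal{U}$ whose image under $f$ is $M$.

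For the direction $M \in f(\mathcal{U}) \Rightarrow \LL$ is a Łukasiewicz path, I would fix $\mathcal{T} = f^{-1}(M)$ with nodes $u_1, \ldots, u_n$ in pre-order. By construction of $f$, $\onesop(M[-][i])= |out(u_i)|$, so $\DD[i] = |out(u_i)| - 1$ and $\LL[i] = \sum_{j=1}^i |out(u_j)| - i$. Property~(i) $\LL[n] = -1$ follows from $\sum_{j=1}^n |out(u_j)| = n - 1$ (the total number of edges in the trie). Property~(iii) $\LL[i+1] - \LL[i] \geq -1$ is immediate since it equals $\DD[i+1] = |out(u_{i+1})| - 1 \geq -1$. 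Property~(ii) $\LL[i] \geq 0$ for $i < n$ is the key point, and it follows from the fact, already informally stated in the excerpt, that $\LL[i] + 1$ equals the number of pending edges (edges whose source has been visited but whose destination has not) after processing $u_1, \ldots, u_i$; since the pre-order visit has not yet terminated, at least one such pending edge must exist, so $\LL[i] + 1 \geq 1$. To make this rigorous I would argue by induction on $i$: when $u_{i+1}$ is visited it consumes exactly one pending edge (its incoming edge, except for $u_1$ which contributes the initial $+1$) and produces $|out(u_{i+1})|$ new ones, matching the recursion $\LL[i+1] = \LL[i] + |out(u_{i+1})| - 1$.

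For the converse, assume $\LL$ is a Łukasiewicz path. By the bijection between Łukasiewicz paths and unlabeled ordered trees recalled in the excerpt, there exists a unique unlabeled ordered tree $\mathcal{T}'$ on $n$ nodes $u_1, \ldots, u_n$ (in pre-order) whose associated Łukasiewicz path coincides with $\LL$; in particular $|out(u_j)| = \DD[j] + 1 = \onesop(M[-][j])$ for every $j \in [n]$. I would then label the edges of $\mathcal{T}'$ as follows: for each node $u_j$, let $i_1 \prec \cdots \prec i_{|out(u_j)|}$ be the indices with $M[i_k][j] = 1$, and assign labels $c_{i_1}, \ldots, c_{i_{|out(u_j)|}}$ to the outgoing edges of $u_j$ in left-to-right order. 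The resulting labeled tree $\mathcal{T}$ satisfies the two defining properties of a trie (distinct outgoing labels per node, siblings ordered by incoming label) by construction, and the number of edges labeled $c_i$ equals $\sum_{j=1}^n M[i][j] = \onesop(M[i][-]) = n_i$, so $\mathcal{T} \in \mathcal{U}$. Finally, $c_i \in out(u_j) \iff M[i][j] = 1$, so $f(\mathcal{T}) = M$, proving $M \in f(\mathcal{U})$.

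The main obstacle I anticipate is making the pending-edge argument in the forward direction airtight, since the excerpt only describes it informally; the inductive recursion $\LL[i+1] = \LL[i] + |out(u_{i+1})| - 1$ (with $\LL[0] = 0$ and the initial $+1$ accounting for the root) is the cleanest way to justify both the non-negativity and the terminal value simultaneously. The converse direction is comparatively routine once the Flajolet--Sedgewick bijection is invoked, the only subtlety being to observe that the column-wise ones counts of $M$ match the out-degree sequence dictated by $\LL$, which is exactly the content of Definition~\ref{arrays D and L}.
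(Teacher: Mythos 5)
Your proposal is correct and follows essentially the same route as the paper: both directions hinge on the bijection between Łukasiewicz paths and unlabeled ordered trees, with the reverse direction labeling the $j$-th left-to-right outgoing edge of $u_i$ by the $j$-th top-down $1$ in column $i$ exactly as the paper does. The only difference is that your forward direction verifies the three Łukasiewicz conditions directly via the pending-edge induction, whereas the paper simply observes that $\LL$ coincides with the Łukasiewicz path of the underlying unlabeled tree and appeals to the cited bijection; your version is a bit more self-contained but not a different argument.
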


\begin{proof}
$(\Rightarrow):$ Consider the trie $\mathcal{T} = (V,E) = f^{-1}(M)$ and its underlying unlabeled ordered tree $\mathcal{T}'$, i.e., $\mathcal{T}'$ is the ordered tree obtained by deleting the labels from $\mathcal{T}$.
Obviously, the corresponding nodes of $\mathcal{T}$ and $\mathcal{T}'$ in pre-order have the same out-degree. Therefore, by Definition~\ref{arrays D and L}, the array $\LL$ of $M$ is equal to the Łukasiewicz path of $\mathcal{T}'$.

$(\Leftarrow):$ 
Given $M\in \mathcal{M}$ and its corresponding array $\LL$, we know by hypothesis that $\LL$ is a Łukasiewicz path. Therefore, we consider the unlabeled ordered tree $\mathcal{T}'$ obtained by inverting $\LL$.
Let $u_i$ be the $i$-th node of $\mathcal{T}'$ in pre-order, we label the $j$-th left-to-right edge outgoing from $u_i$ with the symbol corresponding to the $j$-th top-down $1$ in $M[-][i]$. The resulting order labeled tree is a trie $\mathcal{T}''$ and by construction it follows that $f(\mathcal{T}'')=M$.
\end{proof}

To count the number of tries in $\mathcal{U}$, we now introduce the concept of \emph{rotations}.

\begin{definition}[Rotations]\label{def rotations}
    For every matrix $M \in \mathcal{M}$ and integer $r \geq 0$, we define the $r$-th rotation of $M$, denoted by $M^{r}$, as $M^{r}[-][j] = M[-][(j+r-1)\bmod{n} +1]$ for every $j \in [n]$.
\end{definition}

Informally, a rotation $M^{r}$ of a matrix $M \in \mathcal{M}$ is obtained by moving the last $r$ columns of $M$ at its beginning, and therefore $M^0=M$. Moreover, since $M^r = M^{r\bmod {n}}$, in the following we consider only rotations $M^r$ with $r\in [0,n-1]$.
Note also that, for every $r \geq 0$, we know that $M^r \in \mathcal{M}$ as rotations do not change the number of entries equal to $1$ in a row. 
Moreover, we can observe that the array $\DD$ of a rotation $M^r$ corresponds to a cyclic permutation of the array $\DD'$ of $M$.
Next we show that all $n$ rotations of a matrix $M\in \mathcal{M}$ are distinct and exactly one rotation has an array $\LL$ which is Łukasiewicz.
To do this, we recall a known result from~\cite{roteCountingDegrees} and~\cite[Section~I. 5. ``Tree Structures'', Note I.47]{analyticCombinatorics}.

\begin{lemma}\cite[Lemma 2]{roteCountingDegrees}\label{rotation principle}
    Consider a sequence of integers $A = a_1, \ldots, a_n$ with $\sum_{i=1}^{n}a_i = -1$. Then (i)~all $n$ cyclic permutations of $A$ are distinct, and (ii)~there exists a unique cyclic permutation $A' = a_1', \ldots , a_n'$ of $A$ such that $\sum_{i=1}^{j}a_i' \geq 0$, for every $j \in [n-1]$
\end{lemma}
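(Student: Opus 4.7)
The plan is to reason throughout in terms of the prefix sums of the sequence $A$ extended periodically. Setting $S_0 = 0$ and $S_j = S_{j-1} + a_{((j-1)\bmod n) + 1}$ for $j \geq 1$, so that $S_{k+n} = S_k - 1$ for every $k$, the prefix sums of the rotation $A^{(k)} = a_{k+1}, \ldots, a_n, a_1, \ldots, a_k$ are precisely $S_{k+j} - S_k$ for $j \in [n]$. Hence, the condition in (ii) that a rotation $A^{(k)}$ is ``good'' is equivalent to $S_{k+j} \geq S_k$ for every $j \in [n-1]$ (the case $j = n$ giving $-1$ automatically).

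For part (i), I would assume for contradiction that $A^{(k)} = A^{(k')}$ for some $0 \leq k < k' \leq n-1$. Letting $s = k' - k$ and $d = \gcd(s, n)$, the equality forces $A$ to be invariant under shifting by $s$ positions, and hence under shifting by $d$ positions. Then $A$ decomposes into $n/d$ blocks of length $d$ all having the same sum, so each block-sum equals $-d/n$. Since $0 < d < n$ and $d \mid n$, the value $-d/n$ is not an integer, contradicting the integrality of the $a_i$.

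For existence in part (ii), I would choose $k^{*}$ as the \emph{smallest} index in $\{0, 1, \ldots, n-1\}$ attaining the minimum of $\{S_0, S_1, \ldots, S_{n-1}\}$, and verify that $A^{(k^*)}$ is good. For $j$ in the ``non-wrapping'' range $1 \leq j \leq n - 1 - k^{*}$, minimality gives $S_{k^{*} + j} \geq S_{k^{*}}$ directly. For the ``wrap-around'' range $n - k^{*} \leq j \leq n - 1$, writing $m = k^{*} + j - n \in \{0, \ldots, k^{*} - 1\}$, the fact that $k^{*}$ is the \emph{smallest} minimizer yields $S_m > S_{k^{*}}$, hence $S_m \geq S_{k^{*}} + 1$ by integrality, so that $S_{k^{*} + j} = S_m - 1 \geq S_{k^{*}}$.

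For uniqueness, I would take two good rotations $A^{(k)}$ and $A^{(k')}$ with $0 \leq k < k' \leq n - 1$ and derive a contradiction by cross-evaluating: applying the goodness of $A^{(k)}$ at $j = k' - k \in [n-1]$ gives $S_{k'} \geq S_k$, while applying the goodness of $A^{(k')}$ at $j' = n - (k' - k) \in [n-1]$ gives $S_{k + n} \geq S_{k'}$, i.e., $S_k - 1 \geq S_{k'}$, which together force $S_k \leq S_{k'} \leq S_k - 1$. The only delicate point I foresee is the strict-inequality step in the existence argument, which is precisely why $k^{*}$ must be chosen as the smallest (rather than an arbitrary) minimizer of the prefix sums; the rest is careful index bookkeeping in the extended-sum picture.
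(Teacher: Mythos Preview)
Your proof is correct. The paper itself does not prove this lemma; it is quoted from \cite[Lemma~2]{roteCountingDegrees} and used as a black box. The argument you give---reformulating the goodness condition via the extended prefix sums $S_j$, obtaining distinctness of rotations from the fact that a nontrivial period $d\mid n$ would force each length-$d$ block to sum to the non-integer $-d/n$, and pinning down the unique good rotation at the \emph{first} global minimum of $S_0,\ldots,S_{n-1}$---is the standard cycle-lemma proof (going back to Dvoretzky and Motzkin), and would serve as a self-contained substitute for the citation.
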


The following corollary is a consequence of the above lemma and an example is shown in Figure~\ref{fig1}.

\begin{restatable}[]{corollary}{corollaryi}\label{unique matrix in image}
    For every matrix $M \in \mathcal{M}$ all its rotations are distinct and only one of them belongs to $f(\mathcal{U)}$.
\end{restatable}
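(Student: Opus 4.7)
The plan is to deduce the corollary directly from Lemma~\ref{rotation principle} applied to the sequence $\DD$ associated with $M$, observing that the $r$-th rotation $M^r$ has a $\DD$-sequence that is the $r$-th cyclic permutation of the $\DD$-sequence of $M$, and that the $\LL$-sequence of $M^r$ is exactly the sequence of prefix sums of that permuted $\DD$.

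First I would verify that the hypothesis of Lemma~\ref{rotation principle} is met: since every column of $M$ contains at least zero ones, we have $\DD[i] \geq -1$; more importantly,
\[
\sum_{i=1}^{n} \DD[i] \;=\; \onesop(M) - n \;=\; (n-1) - n \;=\; -1,
\]
which matches the required sum. Part~(i) of Lemma~\ref{rotation principle} then gives that the $n$ cyclic permutations of $\DD$ are pairwise distinct. Since the $\DD$-sequence of a matrix is determined by (and determines) the column multiset structure in order, distinct $\DD$-sequences yield distinct matrices; hence the rotations $M^0, \dots, M^{n-1}$ are all distinct, which proves the first half of the corollary.

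For the second half, I would apply part~(ii) of Lemma~\ref{rotation principle} to obtain a unique $r^{*} \in [0, n-1]$ such that the cyclic permutation $\DD'$ of $\DD$ starting at position $r^{*}+1$ satisfies $\sum_{j=1}^{i} \DD'[j] \geq 0$ for every $i \in [n-1]$. This prefix-sum sequence is exactly the array $\LL$ associated with $M^{r^{*}}$, so $\LL$ fulfills conditions~(i) $\LL[n]=-1$ (from the total sum being $-1$) and~(ii) $\LL[i] \geq 0$ for $i \in [n-1]$ in the definition of a Łukasiewicz path. Condition~(iii) $\LL[i+1]-\LL[i] \geq -1$ is automatic since $\LL[i+1]-\LL[i] = \DD'[i+1] = \onesop(M^{r^{*}}[-][i+1]) - 1 \geq -1$. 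Therefore $\LL$ is a Łukasiewicz path, and by Lemma~\ref{characterizing f} we conclude $M^{r^{*}} \in f(\mathcal{U})$.

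Uniqueness of such a rotation follows in the reverse direction: if $M^r \in f(\mathcal{U})$ then by Lemma~\ref{characterizing f} its $\LL$-array is Łukasiewicz, so in particular its prefix sums are $\geq 0$ up to index $n-1$; but by part~(ii) of Lemma~\ref{rotation principle} this property singles out exactly one cyclic permutation of $\DD$, forcing $r = r^{*}$. I do not expect a real obstacle here — the argument is essentially a translation of Lemma~\ref{rotation principle} through the two simple identities $\DD[i] = \onesop(M[-][i]) - 1$ and $\LL[i] = \sum_{j \leq i} \DD[j]$; the only point requiring mild care is making sure that a cyclic shift of the columns of $M$ induces the corresponding cyclic shift on $\DD$ (immediate from Definitions~\ref{arrays D and L} and~\ref{def rotations}) so that Lemma~\ref{rotation principle} can be invoked column-wise rather than re-proved for matrices.
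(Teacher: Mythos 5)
Your proof is correct and takes essentially the same route as the paper's own argument: both apply Lemma~\ref{rotation principle} to the sequence $\DD$ (whose entries sum to $-1$), deduce distinctness of the rotations from distinctness of the cyclic shifts of $\DD$, and single out the unique rotation whose prefix-sum array $\LL$ is nonnegative up to position $n-1$, which together with $\LL[n]=-1$ and $\DD[i]\geq -1$ makes $\LL$ a Łukasiewicz path, so that Lemma~\ref{characterizing f} applies. The only cosmetic remark is that the parenthetical claim that the $\DD$-sequence ``determines'' the column structure is unnecessary (and not literally true, since $\DD$ only records column sums); all you need is the trivial direction that equal matrices have equal $\DD$-sequences.
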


\begin{proof}
    Consider the sequence $\DD$ of matrix $M$. 
    We know that the sequence of integers $\DD[1], \ldots , \DD[n]$ sums to $-1$.
    Consequently, by Lemma~\ref{rotation principle}, it follows that all the cyclic permutations of $\DD$ are distinct and thus all the corresponding rotations of $M$ are distinct too.
    By Lemma~\ref{characterizing f}, to conclude the proof it remains to be shown that there exists a unique rotation $M^r$ whose array $\LL$ is a Łukasiewicz path.
    By Lemma~\ref{rotation principle}, we know there exists a unique $M^r$ rotation for which its corresponding array $\LL$ is nonnegative excluding the last position, which is a necessary condition for $\LL$ to be a Łukasiewicz path.
    Therefore $M^r$ is the only candidate that can belong to $f(\mathcal{U})$.
    Moreover, we know that $\LL[n] =-1$ and by Definition~\ref{arrays D and L} it holds that $\LL[i+1]-\LL[i]=\onesop(M^r[-][i])-1 \geq -1$ for every $i \in [n-1]$. 
    Since these observations prove that $\LL$ is Łukasiewicz, it follows that $M^r \in f(\mathcal{U})$.
\end{proof}

\makeatletter
\newcommand\notsotiny{\@setfontsize\notsotiny\@vipt\@viipt}
\makeatother

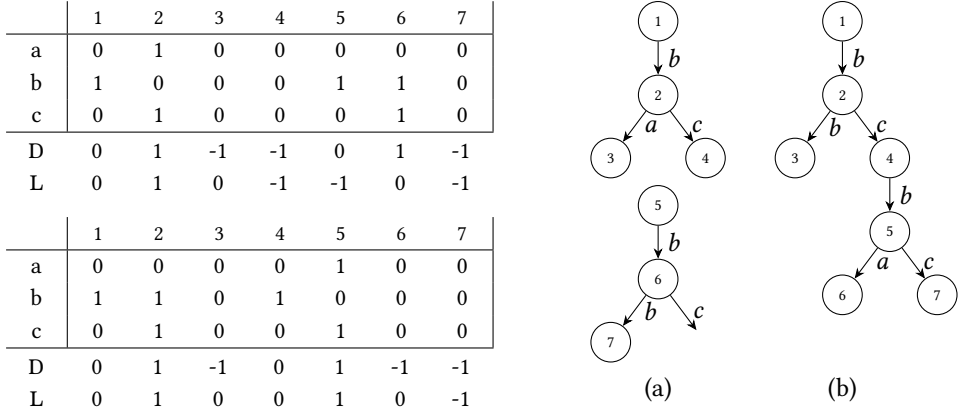
\begin{figure}[H]
	\centering
	\renewcommand{\arraystretch}{0.9}
	\begin{subfigure}{0.45\textwidth}
    \vspace{0pt}
	\begin{tabular}{ P{1em} | P{1em} P{1em} P{1em} P{1em} P{1em} P{1em} P{1em} |}
		   & \scriptsize{1} & \scriptsize{2} & \scriptsize{3} & \scriptsize{4} & \scriptsize{5} & \scriptsize{6} & \scriptsize{7} \\
		 \hline
		 \footnotesize{a} & 0 & 1 & 0 & 0 & 0 & 0 & 0 \\
		 \footnotesize{b} & 1 & 0 & 0 & 0 & 1 & 1 & 0 \\
		 \footnotesize{c} & 0 & 1 & 0 & 0 & 0 & 1 & 0 \\
		 \hline

	\end{tabular}
		
	\begin{tabular}{ P{1em} P{1em} P{1em} P{1em} P{1em} P{1em} P{1em} P{1em} }

		  $\DD$ & 0 & 1 & -1 & -1 & 0 & 1 & -1 \\
		$\LL$ & 0 & 1 & 0 & -1 & -1 & 0 & -1 \\

	\end{tabular}

	\vspace{0.5em}
	
	\begin{tabular}{ P{1em} | P{1em} P{1em} P{1em} P{1em} P{1em} P{1em} P{1em} |}
	& \scriptsize{1} & \scriptsize{2} & \scriptsize{3} & \scriptsize{4} & \scriptsize{5} & \scriptsize{6} & \scriptsize{7} \\
	\hline
	\footnotesize{a} & 0 & 0 & 0 & 0 & 1 & 0 & 0 \\
	\footnotesize{b} & 1 & 1 & 0 & 1 & 0 & 0 & 0 \\
	\footnotesize{c} & 0 & 1 & 0 & 0 & 1 & 0 & 0 \\
	\hline
	
	\end{tabular}
		
	\begin{tabular}{ P{1em} P{1em} P{1em} P{1em} P{1em} P{1em} P{1em} P{1em} }
		
		$\DD$ & 0 & 1 & -1 & 0 & 1 & -1 & -1 \\
		$\LL$ & 0 & 1 & 0 & 0 & 1 & 0 & -1 \\
		
	\end{tabular}

	\end{subfigure}
	\begin{subfigure}{0.4\textwidth}
	\centering
    \raisebox{-0.7em}{
	\begin{tikzpicture}[
		dim/.style={minimum size=1em, font={\notsotiny}}, 
		scale = 0.7,
		ghost/.style={draw=none},
        writes/.style={font={\small}},
		dots/.style={text centered, font={\small}}]

		\node[state, dim] (1) at (0,0) {1};
		\node[state, dim] (2) at (0,-1.4) {2};
		\node[state, dim] (3) at (-0.9,-2.6) {3};
		\node[state, dim] (4) at (0.9,-2.6) {4};
		\node[state, dim] (5) at (0,-3.5) {5};
		\node[state, dim] (6) at (0,-4.9) {6};
		\node[state, dim] (7) at (-0.9,-6.1) {7};
		\node[state, dim, ghost] (8) at (0.9,-6.1) {};

        \node[dots] (9) at (0, -7) {(a)};
		
		\draw[-{Stealth[length=1.4mm, width=1.2mm]}] (1) to node [right, writes] {$b$} (2);
		\draw[-{Stealth[length=1.4mm, width=1.2mm]}] (2) to node [right, writes] {$a$} (3);
		\draw[-{Stealth[length=1.4mm, width=1.2mm]}] (2) to node [right, writes] {$c$} (4);
		\draw[-{Stealth[length=1.4mm, width=1.2mm]}] (5) to node [right, writes] {$b$} (6);
		\draw[-{Stealth[length=1.4mm, width=1.2mm]}] (6) to node [right, writes] {$b$} (7);
		\draw[-{Stealth[length=1.4mm, width=1.2mm]}] (6) to node [right, writes] {$c$} (8);

        \begin{scope}[xshift=3.5cm]

		\node[state, dim] (1) at (0,0) {1};
		\node[state, dim] (2) at (0,-1.4) {2};
		\node[state, dim] (3) at (-0.9,-2.6) {3};
		\node[state, dim] (4) at (0.9,-2.6) {4};
		\node[state, dim] (5) at (0.9,-4.0) {5};
		\node[state, dim] (6) at (0,-5.2) {6};
		\node[state, dim] (7) at (1.8,-5.2) {7};
			

        \node[dots] (9) at (0, -7) {(b)};
            
		\draw[-{Stealth[length=1.4mm, width=1.2mm]}] (1) to node [right, writes] {$b$} (2);
		\draw[-{Stealth[length=1.4mm, width=1.2mm]}] (2) to node [right, writes] {$b$} (3);
		\draw[-{Stealth[length=1.4mm, width=1.2mm]}] (2) to node [right, writes] {$c$} (4);
		\draw[-{Stealth[length=1.4mm, width=1.2mm]}] (4) to node [right, writes] {$b$} (5);
		\draw[-{Stealth[length=1.4mm, width=1.2mm]}] (5) to node [right, writes] {$a$} (6);
		\draw[-{Stealth[length=1.4mm, width=1.2mm]}] (5) to node [right, writes] {$c$} (7);
        \end{scope}
        
		\end{tikzpicture}
        }
	\end{subfigure}

    \caption{In the top-left corner, the figure shows a $3\times7$ binary matrix $M$ with $n_1 = 1$, $n_2 = 3$, and $n_3 = 2$, and its sequences $\DD$ and $\LL$. 
    Since $\LL$ is not Łukasiewicz, by inverting $M$, we obtain the object~(a) which is not a trie.
    This happens because there are no pending edges to which the pre-order node $5$ can be attached since $\LL[4] = -1$.
    However, the rotated matrix $M^3$ showed in the bottom-left corner produce the valid trie shown in~(b).
    In this case $\LL$ is Łukasiewicz.
    }
    \label{fig1}
\end{figure}

In the following, we prove the main result of this section.

\begin{proof}[Theorem~\ref{theorem number of tries}]
    Let $\sim$ be the relation over $\mathcal{M}$ defined as follows.
    For every $M, \bar M \in \mathcal{M}$ we have that $M \sim \bar M$ holds if and only if, there exists an integer $i$ such that $M^i = \bar M$.
    It is easy to show that $\sim$ is an equivalence relation over $\mathcal{M}$.
    Indeed, $\sim$ satisfies; (i)~reflexivity ($M \sim M$ since $M^0 = M$), (ii)~symmetry (if $M^i = \bar M$, then $\bar M^{(n-i)} = M$), and (iii)~transitivity (if $M^i = \bar M$ and $\bar M^j = \hat M$, then $M^{i+j} = \hat M$).
    By Corollary~\ref{unique matrix in image}, we know that for every $M \in \mathcal{M}$ the equivalence class $[M]_{\sim}$ contains $n$ elements with a unique matrix $\bar{M} \in [M]_{\sim}$ such that $\bar M \in f(\mathcal{U})$.
    Therefore, since $f$ is injective, we deduce that $\lvert \mathcal{U} \rvert = \lvert f(\mathcal{U}) \rvert = \lvert \mathcal{M} \rvert / n$.
    Finally, since it is easy to observe that $\lvert \mathcal{M} \rvert = \prod_{i = 1}^{\sigma}\binom{n}{n_i}$ it follows that $\lvert \mathcal{U} \rvert = \frac{1}{n}\prod_{i = 1}^{\sigma}\binom{n}{n_i}$.
\end{proof}

\section{Entropy of a trie}\label{sec:entropy}

We now use Theorem~\ref{theorem number of tries} from the previous section to introduce the \emph{worst-case entropy} of a trie with given symbol frequencies $n_1,\ldots,n_\sigma$. 

\begin{definition}[Worst-case entropy]\label{def: worst-case}
    The worst-case entropy $\mathcal{H}^{\wc}$ of a trie $\mathcal{T}$ with symbol distribution $n_1, \ldots, n_\sigma$ is $\mathcal{H}^{\wc}(\mathcal{T})  = \log \lvert \mathcal{U} \rvert =  \sum_{i=1}^{\sigma} \log \binom{n}{n_i}  - \log n$
\end{definition}

Clearly the above worst-case entropy is a lower-bound for the minimum number of bits (in the worst-case) required to encode a trie with a given number of symbol occurrences.
Note that $\mathcal{H}^{\wc}(\mathcal{T})$ is always smaller than or equal to the worst-case entropy $\mathcal{C}(n,\sigma) = \log \frac{1}{n} \binom{n\sigma}{n - 1}$~\cite{RRR} of tries having $n$ nodes over an alphabet of size $\sigma$ and it is much smaller when the symbol distribution is skewed.
Next we introduce the notion of \emph{$0$-th order empirical entropy} of a trie $\mathcal{T}$. 

\begin{definition}[$0$-th order empirical entropy]\label{def: 0-entropy}
    The $0$-th order empirical entropy $\mathcal{H}_0$ of a trie $\mathcal{T}$ with symbol distribution $n_1, \ldots, n_\sigma$ is defined as;

    \[
    \mathcal{H}_{0}(\mathcal{T}) = \sum_{i=1}^{\sigma}\frac{n_i}{n}\log\left(\frac{n}{n_i}\right) + \frac{n - n_i}{n}\log\left( \frac{n}{n-n_i} \right)
    \]
\end{definition}

Given bitvector $B_{n,m}$ of size $n$ with $m$ ones, the worst-case entropy is defined as $\mathcal{H}^{\wc}(B_{n,m}) = \log \binom{n}{m}$ while the corresponding $0$-th order empirical entropy is $\mathcal{H}_0(B_{n,m}) = \frac{m}{n}\log(\frac{n}{m}) + \frac{n-m}{n}\log(\frac{n}{n-m})$~\cite{compactDataStructures, elementsOfInformationTheory}.
Furthermore, the following known inequalities relate the above entropies of a binary sequence: it holds that $n\mathcal{H}_0(B_{n,m})-\log(n+1) \leq \mathcal{H}^{\wc}(B_{n,m})\leq n\mathcal{H}_0(B_{n,m})$~\cite[Equation 11.40]{elementsOfInformationTheory}.
The next result is a direct consequence of these inequalities.

\begin{lemma}\label{lemma:emp_wc}
    For every trie $\mathcal{T}$ with $n$ nodes, the following formula holds $n\mathcal{H}_0(\mathcal{T})-\sigma\log(n+1)-\log n \leq \mathcal{H}^{\wc}(\mathcal{T})\leq n\mathcal{H}_0(\mathcal{T})- \log n$
\end{lemma}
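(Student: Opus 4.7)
The plan is to reduce the trie inequality directly to $\sigma$ applications of the well-known binary-sequence inequality already recalled in the paragraph before the statement. The key observation is that both $\mathcal{H}^{\wc}(\mathcal{T})$ and $n\mathcal{H}_0(\mathcal{T})$ split as sums of per-symbol terms that coincide exactly with the worst-case and $0$-th order empirical entropies of a bitvector of length $n$ with $n_i$ ones.

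First, I would rewrite each summand in the definitions. For every $i \in [\sigma]$, the term $\log\binom{n}{n_i}$ appearing in $\mathcal{H}^{\wc}(\mathcal{T})$ is exactly $\mathcal{H}^{\wc}(B_{n,n_i})$, and the term $\frac{n_i}{n}\log(\frac{n}{n_i}) + \frac{n-n_i}{n}\log(\frac{n}{n-n_i})$ in $\mathcal{H}_0(\mathcal{T})$ is exactly $\mathcal{H}_0(B_{n,n_i})$. Hence
\[
\mathcal{H}^{\wc}(\mathcal{T}) = \sum_{i=1}^{\sigma}\mathcal{H}^{\wc}(B_{n,n_i}) - \log n, \qquad n\mathcal{H}_0(\mathcal{T}) = \sum_{i=1}^{\sigma} n\mathcal{H}_0(B_{n,n_i}).
\]

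Next, I would apply the cited inequality $n\mathcal{H}_0(B_{n,n_i}) - \log(n+1) \leq \mathcal{H}^{\wc}(B_{n,n_i}) \leq n\mathcal{H}_0(B_{n,n_i})$ for each $i \in [\sigma]$ and sum the $\sigma$ resulting inequalities. This yields
\[
n\mathcal{H}_0(\mathcal{T}) - \sigma\log(n+1) \;\leq\; \sum_{i=1}^{\sigma}\mathcal{H}^{\wc}(B_{n,n_i}) \;\leq\; n\mathcal{H}_0(\mathcal{T}).
\]
Subtracting $\log n$ from every side and using the first identity above gives the claimed bound
\[
n\mathcal{H}_0(\mathcal{T}) - \sigma\log(n+1) - \log n \;\leq\; \mathcal{H}^{\wc}(\mathcal{T}) \;\leq\; n\mathcal{H}_0(\mathcal{T}) - \log n.
\]

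There is essentially no obstacle: the only nontrivial ingredient is the bitvector inequality, which is quoted from the information-theory literature. The proof is a one-line reduction once one recognizes that both entropy measures for tries are, up to the $-\log n$ correction coming from the $1/n$ factor in $|\mathcal{U}|$, simply sums of the per-symbol binary entropies. I would make sure to state this recognition step explicitly, since it is what makes Definition~\ref{def: 0-entropy} (which at first sight looks ad hoc) match Theorem~\ref{theorem number of tries} in the expected way.
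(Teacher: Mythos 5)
Your proposal is correct and is exactly the argument the paper intends: the paper gives no explicit proof, stating only that the lemma ``is a direct consequence'' of the quoted binary-sequence inequality $n\mathcal{H}_0(B_{n,m})-\log(n+1) \leq \mathcal{H}^{\wc}(B_{n,m})\leq n\mathcal{H}_0(B_{n,m})$, which is precisely the per-symbol decomposition and summation you carry out. Your explicit identification of each summand of $\mathcal{H}^{\wc}(\mathcal{T})$ and $n\mathcal{H}_0(\mathcal{T})$ with the corresponding bitvector entropies, followed by the $-\log n$ correction, fills in the omitted details faithfully.
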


The following corollary directly follows from the above lemma.

\begin{corollary}\label{corollary: relation worst-case empirical}
For every trie $\mathcal{T}$ it holds that
$n\mathcal{H}_0(\mathcal{T}) = \mathcal{H}^{\wc}(\mathcal T) + O(\sigma \log n)$
\end{corollary}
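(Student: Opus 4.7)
The plan is straightforward: the corollary is a purely algebraic consequence of Lemma~\ref{lemma:emp_wc}, so I would simply rearrange the two-sided bound there to isolate the quantity $n\mathcal{H}_0(\mathcal{T}) - \mathcal{H}^{\wc}(\mathcal T)$ and then absorb everything into a big-$O$ term. No additional combinatorial or information-theoretic ingredient is needed beyond what is already stated.

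Concretely, I would start from the lemma's chain
\[
n\mathcal{H}_0(\mathcal{T})-\sigma\log(n+1)-\log n \;\leq\; \mathcal{H}^{\wc}(\mathcal{T}) \;\leq\; n\mathcal{H}_0(\mathcal{T})-\log n.
\]
From the right-hand inequality I get $n\mathcal{H}_0(\mathcal{T}) \geq \mathcal{H}^{\wc}(\mathcal T) + \log n \geq \mathcal{H}^{\wc}(\mathcal T)$, i.e.\ a lower bound on $n\mathcal H_0(\mathcal T) - \mathcal H^{\wc}(\mathcal T)$. From the left-hand inequality I get $n\mathcal{H}_0(\mathcal{T}) \leq \mathcal{H}^{\wc}(\mathcal T) + \sigma\log(n+1) + \log n$, i.e.\ an upper bound on the same difference. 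Together these yield
\[
0 \;\leq\; n\mathcal{H}_0(\mathcal{T}) - \mathcal{H}^{\wc}(\mathcal T) \;\leq\; \sigma\log(n+1) + \log n,
\]
and then I would bound the right-hand side by $(\sigma+1)\log(n+1) = O(\sigma\log n)$, which gives $n\mathcal H_0(\mathcal T) = \mathcal H^{\wc}(\mathcal T) + O(\sigma\log n)$ as claimed.

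There is essentially no obstacle here: the only minor care point is making sure the additive $\log n$ term from the upper bound on $\mathcal H^{\wc}$ (which actually goes in the favourable direction) is not mishandled, and that $\sigma \log(n+1) + \log n$ is correctly identified as $O(\sigma \log n)$ uniformly in $\sigma \geq 1$ and $n \geq 2$. Since the inequality is two-sided and the gap is already explicit, the proof is one short paragraph; the real content of the result lives in Lemma~\ref{lemma:emp_wc}, not in this corollary.
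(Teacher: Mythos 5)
Your proposal is correct and matches the paper exactly: the paper also derives the corollary as a direct algebraic consequence of Lemma~\ref{lemma:emp_wc}, with no additional ingredient. The rearrangement you describe, bounding the difference $n\mathcal{H}_0(\mathcal{T}) - \mathcal{H}^{\wc}(\mathcal T)$ between $0$ and $\sigma\log(n+1)+\log n = O(\sigma\log n)$, is precisely what the paper leaves implicit when it states that the corollary ``directly follows.''
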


Note that the results of Lemma~\ref{lemma:emp_wc} and Corollary~\ref{corollary: relation worst-case empirical} are valid also if we consider the effective alphabet formed by all symbols labeling at least one edge in $\mathcal{T}$, as the other characters do not affect $\mathcal{H}^{\wc}$ or $n\mathcal{H}_0$.
The result of Corollary~\ref{corollary: relation worst-case empirical} is consistent with known results for strings: given a string $S$ of length $n$ over the alphabet $\Sigma$ it is known that $n\mathcal{H}_0(S)=\mathcal{H}^{\wc}(S)+O(\sigma\log n)$ (see~\cite[Section 2.3.2]{compactDataStructures}) where $\mathcal{H}^{\wc}(S)=\log\binom{n}{n_1,\ldots,n_\sigma}$ is the worst-case entropy for the set of strings with character occurrences $n_i$ and ${H}_0(S)$ is the zero-order empirical entropy of $S$ as defined in~\cite{compactDataStructures}.
We now aim to extend the $0$-th order trie entropy to any arbitrary integer $k > 0$.
To this purpose, similarly as in~\cite{xBWTconf, xBWTjournal, rindexTrie}, for every integer $k \geq 0$, we define the \emph{$k$-th order context} of  a node $u$, denoted by $\lambda_k(u)$, as the string formed by the last $k$ symbols of the path from the root to $u$.
Formally, we have $\lambda_0(u) = \epsilon$ and $\lambda_k(u) = \lambda_{k-1}(\pi(u)) \cdot \lambda(u)$.
By definition of $\pi$ (see Section~\ref{sec: Notation}), if a node $u$ has a depth $d$ with $d < k$, then such string is left-padded by the string $\#^{k-d}$.
We now define the integers $n_w$ and $n_{w,c}$, for every $w \in  \Sigma^{k}$ and $c \in \Sigma$ as follows: $n_{w}$ is the number of nodes having $w$ as their $k$-th order context and $n_{w,c}$ is the number of nodes having $w$ as their $k$-th order context and an outgoing edge labeled by $c$.
Formally, $n_w = \lvert \{ v \in V : w = \lambda_k(u) \} \rvert$ and $n_{w,c} = \lvert \{ v \in V : w = \lambda_k(u)\; \text{and}\; c \in out(u)\}\rvert$.
With the next definition we generalize our notion of empirical entropy of a trie to higher orders in a similar way of what has been done in~\cite{xBWTjournal, xBWTconf} for labeled trees.

\begin{definition}[$k$-th order empirical entropy]\label{def: k-entropy}
    For every integer $k \geq 0$, the $k$-th order empirical entropy of a trie $\mathcal{T}$ is defined as follows;

    \[
    \mathcal{H}_{k}(\mathcal{T}) = \sum_{w \in  \Sigma^k}\sum_{c \in \Sigma}\frac{n_{w,c}}{n}\log\left(\frac{n_w}{n_{w,c}}\right) + \frac{n_w - n_{w,c}}{n}\log\left( \frac{n_w}{n_w-n_{w,c}} \right)
    \]
\end{definition}

Note that for $k = 0$ this formula coincides with that of Definition~\ref{def: 0-entropy} since in this case $\Sigma^0= \{\epsilon\}$ and $n_{\epsilon}=n$ and for every character $c_i \in \Sigma$, we have $n_{\epsilon,c_i} = n_i$.
Moreover, analogously to strings, by the log-sum inequality (see \cite[Eq. (2.99)]{elementsOfInformationTheory}), for every trie $\mathcal{T}$ and integer $k\geq 0$ it holds that $H_{k+1}(\mathcal{T}) \leq H_{k}(\mathcal{T})$.

\section{Indexing tries}\label{sec: indexing}

In this section, we prove that it is possible to represent and index a trie in a space upper-bounded by its $k$-th order empirical entropy (see Definition~\ref{def: k-entropy}) plus a $o(n)$ term.
In particular, this representation corresponds to the analogous of the FM-index~\cite{FMindex,FMI-Journal} for tries.
In the following we assume that the alphabet $\Sigma$ is {\em effective}, that is every symbol in $\Sigma$, with the exception of the special character $\#$, labels at least one edge in $\mathcal{T}$ and therefore $\sigma \leq n$.
We denote with $r$ the root of $\mathcal{T}$, which has depth $d=0$ and with $\lambda(r \rightsquigarrow u)$ the string labeling the downward path from $r$ to $u$.
Formally, if $u$ has depth $d$ then $\lambda(r\rightsquigarrow u)=\lambda_d(u)$.
We now extend the total order $\preceq$ over $\Sigma^*$ to the nodes of $\mathcal{T}$ by considering the sequence of nodes $u_1\preceq\ldots\preceq u_n$ sorted according to the co-lexicographic order of the string labeling their incoming path.
Formally, we have that $u_i\preceq u_j$ if and only if $\lambda(r\rightsquigarrow u_i)\preceq \lambda(r\rightsquigarrow u_j)$. Consequently, it is easy to observe that $u_1=r$ holds for every trie $\mathcal{T}$.
We now recall the definition of the $\BWT(\mathcal{T})$ of a trie $\mathcal{T}$~\cite{rindexTrie,xBWTconf,xBWTjournal}.
Informally, $\BWT(\mathcal{T})$ is defined as a sequence of $n$ sets, where the $i$-th set contains the labels outgoing from the $i$-th node according to the co-lexicographic node ordering $\preceq$.

\begin{definition}[XBWT of a trie]\cite[Definition 3.1]{rindexTrie}\label{definition: Burrows-Wheeler}
Given a trie $\mathcal{T}$ with co-lexicographic sorted nodes $u_1,\ldots,u_n$, then $
\BWT(\mathcal{T})=out(u_1),\ldots,out(u_n)$
\end{definition}

This transformation is invertible and can be computed in $O(n)$ time using the algorithm proposed by Ferragina et al.~\cite[Theorem 2]{xBWTjournal}.
In the following we recall the representation of $\BWT(\mathcal{T})=out(u_1)\ldots out(u_n)$ proposed by Belazzougui~\cite[Theorem 2]{Belazzougui2010} which is based on $\sigma$ bitvectors.
This representation was originally introduced to store the trie underlying the Aho-Corasick automaton~\cite{AC75} and later Hon et al.~\cite{Hon2013} observed the connection between this representation and the XBWT.
Given $c\in \Sigma$ we define $B_c[1..n]$ as the bitvector having $B_c[j]=1$ if and only if $c\in out(u_j)$, consequently if $c$ is the $i$-th character of $\Sigma$, then $B_c$ has exactly $n_i$ bits equal to $1$.
We also store the usual array $C[1..\sigma]$ for BWT-based indexes, such that $C[1] = 0$ (corresponding to $\#$) and $C[i] = (\sum_{j = 1}^{i-1} n_j)+1$ for every $i > 1$.
In other words, $C[i]$ stores the number of nodes in $\mathcal{T}$ whose incoming symbol is smaller than $c_i$, where the plus $1$ term is required since the root has no incoming edges.
Consider the alphabet $ \hat \Sigma = \Sigma \setminus\{ \# \}$, given a string $p\in \hat \Sigma^m$, the count {\em subpath query}~\cite{xBWTjournal, rindexTrie} on a labeled tree $\mathcal{T}$ consists in counting the number of nodes reached by pattern $p$, where the path can start from any node of $\mathcal{T}=(V,E)$, i.e., $count(\mathcal{T},p)=\lvert\{u\in V \mid \lambda_m(u)=p \}\rvert$.
In the XBWT the set of nodes reached by $p$ forms an interval in the co-lexicographic sorted sequence $u_1, \ldots, u_n$~\cite{xBWTjournal}, and this interval can be inductively computed by means of {\em forward search} as follows.
Suppose we want to compute the interval $[u_i,u_j]$ for a given pattern $p$ of length $m$, where $p=\alpha c$ with $\alpha \in \hat \Sigma ^{m-1}$ and $c \in \hat \Sigma$.
If $[u_{i'}, u_{j'}]$ is the interval of nodes reached by $\alpha$, then it holds that $i=C[c]+\rank(i'-1,B_c)+1$ and $j=C[c]+\rank(j',B_c)$ where $[u_1,u_n]$ is the interval of the empty pattern $\epsilon$.
We now prove a preliminary result.

\begin{lemma}\label{lemma:trieH0}
Let $\mathcal{T}$ be a trie and $\varepsilon$ be an arbitrary constant with $0 \leq \varepsilon <1$.
If $\sigma \leq n^\varepsilon$, then we can store the $\BWT(\mathcal{T})$ within $n\mathcal{H}_{0}(\mathcal{T})+o(n)$ bits of space and support $count(\mathcal{T}, p)$ queries for a pattern $p \in \hat \Sigma^{m}$ in: 1)~Optimal $O(m)$ time if $\sigma = O(\log^{\varepsilon} n)$ and 2)~near-optimal $O(m\log n)$ time otherwise
\end{lemma}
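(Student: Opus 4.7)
The plan is to instantiate the $\sigma$-bitvector representation of $\BWT(\mathcal{T})$ recalled just before the lemma, choosing between the FID and ID versions of Lemma~\ref{thm:RRR} according to the size of $\sigma$. The index data is: one bitvector $B_c$ of length $n$ with $n_c$ ones for each $c \in \Sigma$, plus the array $C[1..\sigma]$. Forward search for a pattern $p$ of length $m$ performs $m$ iterations, each of which reads the current character $c$ of $p$ and executes two rank queries $\rank(i'-1,B_c)$ and $\rank(j',B_c)$, so the query time is entirely determined by the cost of rank on the chosen dictionary representation.

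For the space analysis, the key observation is that Definition~\ref{def: 0-entropy} can be rewritten as $n\mathcal{H}_0(\mathcal{T}) = \sum_{c \in \Sigma} n\mathcal{H}_0(B_{n,n_c})$, and the standard binomial inequality $\log\binom{n}{n_c} \le n\mathcal{H}_0(B_{n,n_c})$ recalled in Section~\ref{sec:entropy} yields
\[
\sum_{c \in \Sigma} \Bigl\lceil \log\binom{n}{n_c}\Bigr\rceil \;\le\; n\mathcal{H}_0(\mathcal{T}) + O(\sigma).
\]
Since $\sigma \le n^{\varepsilon}$ with $\varepsilon < 1$, the $O(\sigma)$ slack and also the cost $O(\sigma \log n)$ of storing $C$ are both $o(n)$, so the only remaining quantity to control is the redundancy of the dictionary representations chosen for the $B_c$'s.

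Case 1, $\sigma = O(\log^\varepsilon n)$: I would store each $B_c$ as a FID by Lemma~\ref{thm:RRR}(\ref{thm:RRR property 1}), adding $O(n \log\log n / \log n)$ bits per bitvector; summed over $c$ this is $O(\sigma n \log\log n/\log n) = o(n)$, and since full rank is answered in $O(1)$ time the whole forward search runs in the optimal $O(m)$ time. Case 2, $\sigma \le n^\varepsilon$: I would store each $B_c$ as an ID by Lemma~\ref{thm:RRR}(\ref{thm:RRR property 2}), adding $o(n_c) + O(\log\log n)$ bits per bitvector, whose sum is $o(\sum_c n_c) + O(\sigma \log\log n) = o(n) + O(n^\varepsilon \log\log n) = o(n)$; partial rank is $O(1)$, and as observed just after Lemma~\ref{thm:RRR}, full rank is obtained in $O(\log n)$ by a binary search over \select\ queries, giving the claimed $O(m \log n)$ time.

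The only real obstacle is bookkeeping: making sure every alphabet-dependent redundancy, namely the $O(\sigma)$ ceilings, the $O(\sigma\log n)$ bits for $C$, and the $\sigma$-times-per-bitvector overheads of Lemma~\ref{thm:RRR}, is squeezed below $o(n)$ under the two different regimes for $\sigma$. Each of these bounds is immediate from $\sigma \le n^\varepsilon$ (respectively $\sigma = O(\log^\varepsilon n)$) with $\varepsilon < 1$, so the calculation is routine rather than conceptually hard; the substantive content is the reduction of the trie problem to $\sigma$ independent binomial dictionaries, which is already achieved by the $B_c$-representation.
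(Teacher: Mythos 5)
Your proposal is correct and follows essentially the same route as the paper's proof: the same $\sigma$-bitvector decomposition of $\BWT(\mathcal{T})$, the same case split between the FID representation for $\sigma = O(\log^\varepsilon n)$ and the ID representation otherwise, and the same accounting of the redundancies and of the $O(\sigma\log n)$ auxiliary arrays. The only cosmetic difference is that you bound $\sum_c \log\binom{n}{n_c}$ by $n\mathcal{H}_0(\mathcal{T})$ directly via the per-bitvector binomial inequality, whereas the paper routes the same inequality through $\mathcal{H}^{\wc}(\mathcal{T})$ and Lemma~\ref{lemma:emp_wc}.
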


\begin{proof}
    In order to store and to support rank operations on the $\sigma$ bitvectors $B_c$ corresponding to the $\BWT(\mathcal{T})$, we use the representations of Raman et al.~\cite{RRR} that are summarized in Lemma~\ref{thm:RRR}.
    Using the Solution~\ref{thm:RRR property 1} of Lemma~\ref{thm:RRR}, we store the FID representations of every $B_c$ using $\sum_{i=1}^{\sigma}  \log \binom{n}{n_i} + O(\sigma n\log \log n/\log n)$ bits.
    If $\sigma = O(\log^\varepsilon n)$, then the resulting index takes $\sum_{i=1}^\sigma  \log \binom{n}{n_i} + o(n)$ bits of space. 
    Otherwise as done by Belazzougui~\cite[Theorem 2]{Belazzougui2010}, we use the ID representation~\ref{thm:RRR property 2} of Lemma~\ref{thm:RRR} to represent every $B_c$ within $\sum_{i=1}^\sigma \log \binom{n}{n_i} + o(n) + O(\sigma \log \log n)$ bits of total space which is $\sum_{i=1}^\sigma \log \binom{n}{n_i} + o(n)$ since $\sigma \leq n^\varepsilon$.
    In both cases, in addition to the array $C$, we store an array $A$ of length $\sigma$ such that $A[c]$ stores a pointer to the representation of $B_c$.
    Both arrays take $O(\sigma \log n) = o(n)$ bits of space since $\sigma \leq n^\varepsilon$.
    Note that in the first case we obtain the $O(m)$ time bound as FID representations support rank operations in constant time.
    In the second case, the time complexity follows from the fact that using the ID representation, every forward step takes $O(\log n)$ time due to binary search (see Section~\ref{sec: Notation}).
    Finally, by Definition~\ref{def: worst-case}, both solutions take $\sum_{i=1}^\sigma \log \binom{n}{n_i} + o(n)= \mathcal{H}^{\wc}(\mathcal{T})+o(n)$ bits of space. 
    Therefore, the claimed space bound follows from Lemma~\ref{lemma:emp_wc}.
\end{proof}

The above representation for large alphabets is the one proposed by Belazzougui~\cite[Theorem 2]{Belazzougui2010} to compress the \emph{next} transition function of the Aho-Corasick automaton.
The lemma shows that this representation uses a space bounded by our $0$-th order empirical entropy plus a $o(n)$ term and efficiently supports count queries.
We now analyze the space usage of these representations in terms of $\mathcal{H}^{\wc}(\mathcal{T})$.
Since $\log \binom{m}{n}\geq n$ whenever $n\leq m/2$~\cite[Section 1.1.4]{RRR}, if $n_i \leq n/2$ for every $i \in [\sigma]$, then $\mathcal{H}^{\wc}(\mathcal{T})\geq n-1-\log n$, therefore the $o(n)$ term is also $o(\mathcal{H}^{\wc}(\mathcal{T}))$.
In addition, by Corollary~\ref{corollary: relation worst-case empirical}, we know that $n\mathcal{H}_0(\mathcal{T}) = \mathcal{H}^{\wc}(\mathcal{T}) + O(\sigma \log n) = \mathcal{H}^{\wc}(\mathcal{T}) + o(n)$ assuming $\sigma \leq n^\varepsilon$.
Therefore, we conclude that index of Lemma~\ref{lemma:trieH0} is {\em succinct} since it takes $\mathcal{H}^{\wc}(\mathcal{T}) + o(n) = \mathcal{H}^{\wc}(\mathcal{T}) + o(\mathcal{H}^{\wc}(\mathcal{T}))$ bits of space.

On the other hand, if there exists $n_j > n/2$, since $\sum_{i = 1}^{\sigma} n_i = n-1$ then $j$ has to be unique, i.e., the symbol $c_j$ is the only occurring more than $n/2$ times. We also observe that $\mathcal{H}^{\wc}(\mathcal{T}) = \sum_{i \in [\sigma]}\log \binom{n}{n_i} - \log n = \sum_{\substack{i \in [\sigma] \setminus \{j\}}} \log \binom{n}{n_i} + \log \binom{n}{n - n_j} - \log n$ holds since $\binom{m}{n} = \binom{m}{m-n}$.
As $n-n_j\leq n/2$ we obtain $\mathcal{H}^{\wc}(\mathcal{T})\geq n^*-\log n$ where $n^*=\sum_{\substack{i \in [\sigma] \setminus \{j\}}} n_i + (n-n_j)$.

Now we replace the bitvector $B_{c_j}$ with its complement $\bar{B}_{c_j}$ obtained by replacing in $B_{c_j}$ every $1$ with $0$ and vice versa. Clearly, $\bar{B}_{c_j}$ has $n-n_j\leq n/2$ bits equal to one, and $\rank(i,B_{c_j})=i - \rank(i,\bar{B}_{c_j})$.
Next, as in Lemma~\ref{lemma:trieH0} we represent every bitvector, including $\bar B_{c_j}$, with the ID representation of Lemma~\ref{thm:RRR} within a total space of $\sum_{\substack{i \in [\sigma] \setminus \{j\}}} \log \binom{n}{n_i} + \log \binom{n}{n - n_j}+o(n^*)+O(\sigma\log\log n)$, where $n^*$ is the total number of $1$s in the bitvectors as defined above.
By adding the space for the arrays $C$ and $A$ of Lemma~\ref{lemma:trieH0} we obtain a final space of $\mathcal{H}^{\wc}(\mathcal{T})+o(\mathcal{H}^{\wc}(\mathcal{T})+\log n)+O(\sigma \log n) = \mathcal{H}^{\wc}(\mathcal{T})+o(\mathcal{H}^{\wc}(\mathcal{T}))+O(\sigma \log n)$ bits. Therefore this representation is succinct up to an additive $O(\sigma \log n)$ number of bits.
We finally observe that since $\log \binom{m}{n}\geq \log m$ if $n$ is not $0$ nor $m$, it holds that $\mathcal{H}^{\wc}(\mathcal{T}) =  \Omega( (\sigma-1) \log n)$, and therefore the previous space is also $O(\mathcal{H}^{\wc}(\mathcal{T}))$ assuming $\sigma > 1$.
In this case we can answer the query $count(\mathcal{T}, p)$ in $O(m\log n)$ time. 
Next, we show that the space required for the index of Lemma~\ref{lemma:trieH0} can be further reduced to $n\mathcal{H}_k(\mathcal{T}) + o(n)$ bits when the contexts length $k$ is sufficiently small.
Before doing that, we recall a result from~\cite{FixBlockCompressionBoostJournal} that we will use in the proof of Theorem~\ref{theorem: trieHk}, instantiated for the particular case of binary strings.

\begin{lemma}\cite[Lemma 4]{FixBlockCompressionBoostJournal}\label{lemma: partitionedH0}
    Let $X_1\cdots X_\ell$ be an arbitrary partition of a string $X$ into $\ell$ blocks and let $X_1^b\cdots X^b_m$ be a partition of $X$ into $m$ blocks of size at most $b$, then $\sum_{i=1}^m\lvert X_i^b \rvert \mathcal{H}_0(X_i^b) \leq \sum_{i=1}^\ell\lvert X_i \rvert \mathcal{H}_0(X_i)+(\ell-1)b$
\end{lemma}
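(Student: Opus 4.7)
The plan is to exploit two facts about the binary zero-order entropy: a \emph{refinement monotonicity} property stating that splitting a block can only decrease the weighted entropy sum, and the trivial bound $|Y|\mathcal{H}_0(Y) \leq |Y|$ that holds because $\mathcal{H}_0 \leq 1$ for binary strings. First I would prove the base case of refinement monotonicity, namely that for any concatenation $Y = Y'Y''$ of binary strings one has $|Y|\mathcal{H}_0(Y) \geq |Y'|\mathcal{H}_0(Y') + |Y''|\mathcal{H}_0(Y'')$. Writing $|Y|\mathcal{H}_0(Y) = |Y|\, h(p)$ with $h(p) = -p\log p - (1-p)\log(1-p)$ and $p$ the frequency of ones in $Y$, this is immediate from the concavity of $h$ applied to the convex combination $p = \tfrac{|Y'|}{|Y|} p' + \tfrac{|Y''|}{|Y|} p''$. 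By iterating this splitting step, any refinement of a partition can only decrease $\sum_i |X_i|\mathcal{H}_0(X_i)$.

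Next I would form the common refinement $R = R_1\cdots R_{m'}$ of the two partitions by cutting $X$ at every boundary of both partitions, so that $\{X_i\}$ and $\{X_i^b\}$ are both coarsenings of $R$. I would classify each $X^b$-block as \emph{good} if it lies entirely inside a single $X_j$ (so it coincides with one $R$-block) and \emph{bad} otherwise. The crucial counting observation is that each of the $\ell-1$ internal boundaries of $X_1\cdots X_\ell$ lies inside at most one $X^b$-block, so the number of bad $X^b$-blocks is at most $\ell-1$.

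Combining these ingredients, the total contribution of the good blocks is bounded by $\sum_j |R_j|\mathcal{H}_0(R_j)$, which by refinement monotonicity applied to the $X_1\cdots X_\ell$ partition is at most $\sum_{i=1}^\ell |X_i|\mathcal{H}_0(X_i)$. For each of the at most $\ell-1$ bad blocks the trivial bound gives $|X_i^b|\mathcal{H}_0(X_i^b) \leq |X_i^b| \leq b$, so their combined contribution is at most $(\ell-1)b$. Summing the two bounds yields the claim. The main obstacle I expect is just the clean formulation of refinement monotonicity: the two-block concavity argument must be lifted by induction to arbitrary refinements, after which the good/bad classification and the counting of bad blocks are routine bookkeeping.
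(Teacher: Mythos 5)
Your argument is correct, and it is essentially the standard proof of this lemma: the paper itself imports the statement from Gog et al.\ without reproving it, and the proof given there follows the same scheme (superadditivity of $\lvert Y\rvert\mathcal{H}_0(Y)$ under concatenation via concavity of the binary entropy function, plus the observation that at most $\ell-1$ fixed-size blocks straddle a boundary of the coarse partition, each contributing at most $b$ bits since $\mathcal{H}_0\leq 1$ for binary strings). All the individual steps check out, including the counting of ``bad'' blocks and the use of the common refinement for the ``good'' ones.
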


Now by using the above Lemma, we show that the fixed block compression~\cite{implicitCompBoost,FixBlockCompressionBoostJournal} used for the higher-order compression of the FM-index can be adapted to tries for the same purpose.

\begin{theorem}\label{theorem: trieHk}
    Let $\mathcal{T}$ be a trie and $\varepsilon$ be an arbitrary constant with $0 \leq \varepsilon <1$.
    If $\sigma \leq n^\varepsilon$, then we can store the $\BWT(\mathcal{T})$ within $n\mathcal{H}_k(\mathcal{T}) + o(n)$ bits of space for every $k = o(\log_\sigma n)$ simultaneously. The index supports $count(\mathcal{T}, p)$ queries for a pattern $p \in \hat\Sigma^{m}$ in optimal $O(m)$ time if $\sigma = O(\log^{\varepsilon} n)$
    and near-optimal $O(m\log n)$ time otherwise.
\end{theorem}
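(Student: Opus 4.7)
The plan is to refine the construction of Lemma~\ref{lemma:trieH0} by replacing the single RRR encoding of each bitvector $B_c$ with a \emph{fixed block compression} scheme. Concretely, for each $c \in \Sigma$, partition $B_c$ into $\lceil n/b \rceil$ contiguous blocks of fixed size $b$, encode each block with the zeroth-order-optimal representation of Lemma~\ref{thm:RRR}, and augment with a two-level prefix-sum index at superblock boundaries. Because the partition depends only on $b$ and not on $k$, the resulting data structure will simultaneously realize the $n\mathcal{H}_k + o(n)$ bound for every admissible $k$.

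For the space analysis, I would fix an arbitrary $k = o(\log_\sigma n)$ and exploit the fact that in the co-lexicographic node ordering, nodes sharing a common $k$-length incoming context $w \in \Sigma^k$ occupy a consecutive range in $u_1, \ldots, u_n$. Hence each $B_c$ decomposes into $\sigma^k$ \emph{contextual} blocks $B_c[w]$ of length $n_w$ containing $n_{w,c}$ ones, and a direct comparison with Definition~\ref{def: k-entropy} yields
\[
n\mathcal{H}_k(\mathcal{T}) = \sum_{c \in \Sigma}\sum_{w \in \Sigma^k} n_w\,\mathcal{H}_0\!\left(B_c[w]\right).
\]
Applying Lemma~\ref{lemma: partitionedH0} to each $B_c$ with the contextual partition as the reference, summing over $c$, and using $\sigma(\sigma^k - 1) \leq \sigma^{k+1}$, the total fixed-block cost is bounded by $n\mathcal{H}_k(\mathcal{T}) + O(\sigma^{k+1} b) + o(n)$, where the $o(n)$ term absorbs the RRR per-block lower-order terms and the two-level bookkeeping. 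Since $\sigma \leq n^{\varepsilon}$ and $k = o(\log_\sigma n)$ give $\sigma^{k+1} = n^{\varepsilon + o(1)}$, choosing $b = \Theta(\log n)$ makes $\sigma^{k+1} b = o(n)$ uniformly in all admissible $k$.

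For the query side, count queries reduce to one rank on each $B_c$ per pattern symbol via the forward-search recurrence of Lemma~\ref{lemma:trieH0}. When $\sigma = O(\log^{\varepsilon} n)$, a single precomputed table of size $2^b\,\mathrm{poly}(b) = o(n)$ bits supports within-block rank in $O(1)$, giving total time $O(m)$; for larger alphabets, the RRR ID representation supports per-block rank in $O(\log n)$ time via \emph{select}, yielding $O(m \log n)$. The delicate step is calibrating $b$ so that three quantities are simultaneously $o(n)$: the contextual redundancy $\sigma^{k+1} b$ from Lemma~\ref{lemma: partitionedH0}, the per-block popcount and class bookkeeping summed over $n/b$ blocks, and the superblock prefix-sum storage. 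With $b = \Theta(\log n)$ all three terms collapse to $o(n)$ uniformly in $k$, and the theorem follows.
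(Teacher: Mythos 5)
Your overall strategy---fixed, context-independent blocking of each $B_c$, zeroth-order encoding of each block, and Lemma~\ref{lemma: partitionedH0} applied against the contextual partition so that the extra cost is only $\sigma(\sigma^k-1)b$ bits---is exactly the paper's strategy, and your identity $n\mathcal{H}_k(\mathcal{T})=\sum_{c}\sum_{w}n_w\mathcal{H}_0(B_c[w])$ and the bound $\sigma^{k+1}b=o(n)$ are correct. The gap is in your claim that a single block size $b=\Theta(\log n)$ makes the per-block overheads $o(n)$ across the whole range $\sigma\leq n^{\varepsilon}$. Those overheads must be summed over \emph{all} blocks of \emph{all} $\sigma$ bitvectors, i.e.\ over $\sigma\lceil n/b\rceil$ blocks, not the $n/b$ blocks you write. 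With $b=\Theta(\log n)$, even a single bit per block (the rounding in $\lceil\log\binom{b}{x}\rceil$, let alone the $\Theta(\log\log n)$ bits needed to record each block's popcount/class or the prefix-sum entries) already costs $\Theta(\sigma n/\log n)$ bits, which is $\omega(n)$ as soon as $\sigma=\omega(\log n)$---well inside the theorem's alphabet range. (Your lookup-table claim also needs $b\leq\frac{1}{2}\log n$ rather than a generic $\Theta(\log n)$ so that $2^{b}\poly(b)=o(n)$, but that is a minor calibration.)

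The paper avoids this by splitting into two regimes. When $\sigma=O(\log^{\varepsilon}n)$ it does essentially what you propose, relying on the implicit boosting effect of the RRR FID's internal chunks of size $\lfloor\frac{1}{2}\log n\rfloor$, where $O(\sigma n/\log n)=o(n)$ indeed holds. For larger $\sigma$ it enlarges the block to $b=\sigma\log^2 n$, so the total number of blocks over all $\sigma$ bitvectors drops to $n/\log^2 n$ independently of $\sigma$; then the per-block constants, the $O(\log\log n)$ terms, and the precomputed rank array all sum to $o(n)$, while the contextual redundancy from Lemma~\ref{lemma: partitionedH0} becomes $\sigma^{k+2}\log^2 n$, still $o(n)$ for $k=o(\log_\sigma n)$. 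Your argument needs this second, $\sigma$-dependent block size (or an equivalent device) to cover the full alphabet range; as written, the space bound fails for, say, $\sigma=\log^2 n$.
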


\begin{proof}
    Consider the partition $\mathcal{V}_k$ of $V$ which groups together the nodes of $\mathcal{T}$ reached by the same length-$k$ string/context $w\in \Sigma^k$, that is $\mathcal{V}_k = \{P_w\neq \emptyset \mid w\in \Sigma^k\}$ where $P_w = \{ v\in V \mid \lambda_k(v)=w \}$.
    By Definition~\ref{definition: Burrows-Wheeler}, every part in $\mathcal{V}_k$ corresponds to a range of nodes in the co-lexicographic ordering.
    Given $\mathcal{V}_k=V_1,\ldots V_\ell$, we call $B_{w,c}$ the portion of $B_c$ whose positions correspond to the interval $V_i$ of nodes reached by the context $w \in \Sigma^k$.
    We observe that every $B_{w,c}$ has length $n_{w}$ and contains exactly $n_{w,c}$ bits equal to $1$, where $n_{w}$ and $n_{w,c}$ are those defined in Section~\ref{sec:entropy}.
    As a consequence $\mathcal{H}_k(\mathcal{T})= \frac{1}{n}\sum_{w \in \Sigma^k} n_w\sum_{c \in \Sigma}\frac{n_{w,c}}{n_w}\log (\frac{n_w}{n_{w,c}}) + \frac{n_w - n_{w,c}}{n_w}\log(\frac{n_w}{n_w-n_{w,c}})$ can be rewritten as $\frac{1}{n} \sum_{c\in \Sigma} \sum_{w\in \Sigma^k} \lvert B_{w,c}\rvert \mathcal{H}_0(B_{w,c})$. 
    Therefore, analogously to the case of strings, we can compress every $B_{w,c}$ individually using a $0$-th order binary compressor to achieve exactly $n\mathcal{H}_k(\mathcal{T})$ bits of space.
    Next, we prove that we can approximate this space by considering a fixed-size context-independent partitioning.
    We distinguish two different cases, depending on the alphabet size.
    If $\sigma = O(\log^\varepsilon n)$ we use an idea similar to~\cite{implicitCompBoost} to show that by applying the FID representation of Raman et al.~\cite[Lemma 4.1]{RRR} to every bitvector $B_c$, without any explicit partitioning, we automatically reach the claimed $n\mathcal{H}_k(\mathcal{T})+o(n)$ space bound.
    The FID representation of Raman et al.\ for a bitvector $B$ works as follows.
    Let $n$ be the length of $B$, we divide $B$ into $t=\lceil n/u \rceil$ chunks of fixed size $u = \lfloor \frac{1}{2} \log n\rfloor$. We refer with $B^i$ the $i$-th chunk in $B$ and with $x_i$ the number of $1$s in $B^i$.
    Every chunk is encoded within $\lceil \log \binom{u}{x_i} \rceil \leq u\mathcal{H}_0(B^i) + 1$ bits where the inequality follows from~\cite[Equation 11.40]{elementsOfInformationTheory}.
    Therefore, the space for representing $B$ is at most $\sum_{i=1}^t u\mathcal{H}_0(B^i) + 1$ and the total space to represent the $\sigma$ bitvectors $B_c$ is at most $\sum_{c \in \Sigma} \sum_{i=1}^t u\mathcal{H}_0(B^i_c) + 1  \leq \sum_{c \in \Sigma} \sum_{i=1}^t u\mathcal{H}_0(B^i_c) + O(\sigma n / \log n) = \sum_{c \in \Sigma} \sum_{i=1}^t u\mathcal{H}_0(B^i_c) + o(n)$ since $\sigma = O(\log^\varepsilon n)$.
    Now we upper-bound the left summation in terms of $\mathcal{H}_k(\mathcal{T})$ by applying Lemma~\ref{lemma: partitionedH0} to the above FID partitioning and the optimal one $\mathcal{V}_k$.
    In particular, we obtain that $\sum_{c\in \Sigma} \sum_{i=1}^t \lvert B_{c}^i\rvert \mathcal{H}_0(B^i_c) \leq n\mathcal{H}_k(\mathcal{T}) + \sigma(\ell-1)u$, where $\ell$ is the size of the optimal partitioning $\mathcal{V}_k$. Since the number of contexts in $\Sigma^k$ is at most $\sigma^k$ it is also $\ell \leq\sigma^k$.
    Therefore, we obtain the upper-bound $n\mathcal{H}_k(\mathcal{T})+o(n)$ because the last term $\sigma (\ell-1)u \leq \sigma^{k+1}\lfloor \frac{1}{2} \log n\rfloor$ is $o(n)$ when $k=o(\log_\sigma n)$.
    The FID representations stores in addition other data structures for answering rank/select queries in $O(1)$ time, we note that this machinery requires $O(n \log \log n / \log n)$ bits for every bitvector $B_c$, thus $O(\sigma n \log \log n / \log n)$ of overall additional space, which is $o(n)$ when $\sigma = O(\log^\varepsilon n)$.
    If instead $\sigma = O(\log^\varepsilon n)$ does not hold, as done by Kosolobov and Sivukhin~\cite[Lemma 6]{Kosolobov2019}, we use the explicit fixed block compression of Gog et al.~\cite{FixBlockCompressionBoostJournal} to compress every bitvector $B_c$.
    In particular, we partition every $B_c$ into $t$ blocks $B_c^{i}$ each of fixed size $b = \sigma \log^2 n $ and we apply the ID representation~\ref{thm:RRR property 2} on every $B_c^i$.
    Thus, we obtain a total space in bits which is at most $\sum_{c\in \Sigma}\sum_{i=1}^t \lvert B_{c}^i\rvert \mathcal{H}_0(B^i_c)+ o(x^i_c) + O(\log \log \lvert B_{c}^i\rvert)$, where $x^i_c$ is the number of $1$s in $B_c^i$ and $t=\lceil n/b \rceil$.
    Note that $\sum_{c\in \Sigma}\sum_{i=1}^t o(x^i_c)$ is $o(n)$ since summing over all the $1$s we are counting the number of edges in $\mathcal{T}$.
    Furthermore, the last term $\sum_{c\in \Sigma}\sum_{i=1}^t O(\log \log \lvert B_{c}^i\rvert)$ is at most $O(\sigma \frac{n}{b} \log \log n)=O(\frac{n\log \log n}{\log^2n})=o(n)$.
    So far we proved that the space occupation is at most $\sum_{c\in \Sigma}\sum_{i=1}^t \lvert B_{c}^i\rvert \mathcal{H}_0(B^i_c) + o(n)$.
    Again by Lemma~\ref{lemma: partitionedH0}, we can upper-bound the left summation with $n\mathcal{H}_k(\mathcal{T}) + \sigma(\ell-1)b$ where the last term $\sigma (\ell-1)b \leq \sigma^{k+2}\log^2 n$ is $o(n)$ when $k=o(\log_\sigma n)$.
    For indexing purposes we also store the usual array $R[1..\sigma][1..\lceil n/b\rceil]$ which memorizes the precomputed rank values preceding every block $B_c^{i}$ for every $B_c$.
    Note that using $R$, we can compute $\rank(j, B_c)$ as $R[c][i] + \rank(j - (i-1)b, B_c^{i})$, where $i=\lceil j / b \rceil$ is the index of the interval where position $j$ falls.
    The array $R$ occupies $O(\sigma \frac{n}{b}\log n)=o(n)$ bits of space and we note that within the same space we can store a pointer to every ID representation.
    Note that the partitions are chosen independently of $k$, therefore the theorem holds for every $k=o(\log_\sigma n)$ simultaneously.
\end{proof}

The solution described in the above theorem for large alphabets coincides with the representation proposed by Kosolobov and Sivukhin~\cite[Lemma 6]{Kosolobov2019} to compress the bitvector-based representation of Belazzougui~\cite{Belazzougui2010} using fixed-block compression.
Similarly as before, the above theorem shows that this solution uses a number of bits which is at most our $k$-th order empirical entropy plus a $o(n)$ term.
Now we compare the space bound of Theorem~\ref{theorem: trieHk} with that of the $r$-index for tries proposed by Prezza~\cite{rindexTrie}.
Consider $\BWT(\mathcal{T}) = out(u_1),\ldots,out(u_n)$ (see Definition~\ref{definition: Burrows-Wheeler}).
An integer $i \in [n]$ is a $c$-run break if $c \in out(u_i)$ and either $i = n$ or $c\notin out(u_{i+1})$ holds. 
Then the number $r$ of \emph{BWT runs}, is $r = \sum_{c \in \Sigma} r_c$, where $r_c$ is the total number of $c$-run breaks in $\BWT(\mathcal{T})$~\cite{rindexTrie}.
Prezza showed that his $r$-index supports some operations including count queries for a pattern $p\in \hat\Sigma^m$ in $O(m\log \sigma)$ time using $O(r \log n) + o(n)$ bits of space~\cite[Lemma 4.5]{rindexTrie}.
Moreover, Prezza proved that for every integer $k \geq 0$, the following inequality holds $r \leq \sum_{w \in \Sigma^{k}} \sum_{c \in \Sigma} \log \binom{n_{w}}{n_{w,c}} + \sigma^{k+1}$~\cite[Theorem 3.1]{rindexTrie}, where $n_w$ and $n_{w,c}$ are the integers defined in Section~\ref{sec:entropy}.
By~\cite[Equation 11.40]{elementsOfInformationTheory} this in turn implies that $r \leq n\mathcal{H}_k(\mathcal{T}) +  \sigma^{k+1}$ for every $k$.
In the following proposition, we exhibit a family of trie for which $r = \Theta(n\mathcal{H}_0(\mathcal{T})) = \Theta(n)$ holds.
This proves that the $r$-index of Prezza may use \emph{asymptotically} more space than the index presented in Theorem~\ref{theorem: trieHk} due to the $\log n$ multiplicative factor in the space occupation of the $r$-index.

\begin{restatable}[]{proposition}{propositioni}
There exists an infinite family of tries with $n$ nodes, such that $r=\Theta(n\mathcal{H}_0(\mathcal{T}))$.
\end{restatable}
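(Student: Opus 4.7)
The plan is to exhibit the family of complete binary tries $\mathcal{T}_d$ over the alphabet $\Sigma = \{a, b\}$: for each $d \geq 1$, the trie $\mathcal{T}_d$ has $n = 2^{d+1} - 1$ nodes, every internal node (at depth $\leq d-1$) has $out(u) = \{a, b\}$, and every leaf has $out(u) = \emptyset$. The symbol distribution is perfectly balanced, $n_a = n_b = (n-1)/2$, so a direct evaluation of Definition~\ref{def: 0-entropy} yields $\mathcal{H}_0(\mathcal{T}_d) = 2$, hence $n\mathcal{H}_0(\mathcal{T}_d) = \Theta(n)$. The goal then reduces to showing that the number of XBWT runs satisfies $r = \Theta(n)$; since $r \leq n-1$ always, only the lower bound $r = \Omega(n)$ is non-trivial.

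Because every $out(u)$ is either $\{a,b\}$ or $\emptyset$, the two bitvectors $B_a$ and $B_b$ both coincide with the indicator of the internal nodes in the co-lexicographic ordering, so $r = 2 r_I$, where $r_I$ counts the maximal blocks of consecutive internal nodes in that ordering. I would then analyze the internal/leaf pattern $p_d$ (a string over $\{I, L\}$ of length $n$) induced by the co-lex order on $\mathcal{T}_d$ and establish the recurrence
\[
p_d = I \cdot p_{d-1} \cdot p_{d-1} \quad \text{for } d \geq 1, \qquad p_0 = L.
\]
The justification uses the standard fact that nodes in co-lex order are grouped by the last character of their incoming path: the root comes first, the $a$-block and $b$-block follow, and within the $a$-block the nodes appear in the co-lex order of their parents. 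The parents of $a$-children are precisely the internal nodes of $\mathcal{T}_d$, whose induced co-lex order as strings coincides with the co-lex order on the smaller trie $\mathcal{T}_{d-1}$; moreover, a parent at depth $i$ contributes $I$ to $p_d$ iff $i \leq d-2$, i.e., iff it is itself internal in $\mathcal{T}_{d-1}$. Hence the $a$-block has pattern $p_{d-1}$, and the $b$-block does too by symmetry.

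A short induction shows that $p_{d-1}$ begins with $I$ and ends with $L$ for every $d \geq 2$. Therefore in $I \cdot p_{d-1} \cdot p_{d-1}$ the leading $I$ merges with the opening $I$-run of the first copy of $p_{d-1}$, while the splice between the two copies is an $L$--$I$ transition and creates no merge. Writing $f(d)$ for the number of $I$-runs in $p_d$, this gives the recurrence $f(d) = 2 f(d-1)$ with $f(1) = 1$, so $r = 2 f(d) = 2^d = (n+1)/2 = \Theta(n)$, which together with $n\mathcal{H}_0(\mathcal{T}_d) = \Theta(n)$ closes the argument. The main subtlety lies in the recurrence step: one must verify that the co-lex order on the internal nodes of $\mathcal{T}_d$, viewed as bare strings, really is the co-lex order of $\mathcal{T}_{d-1}$, so that the smaller trie reappears unchanged inside $p_d$; this is immediate once one observes that truncating the deepest level of $\mathcal{T}_d$ does not alter any root-to-node label.
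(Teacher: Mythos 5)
Your proposal is correct and uses the same family as the paper (complete balanced binary tries over $\{a,b\}$) and the same setup: $n_a=n_b=(n-1)/2$ gives $n\mathcal{H}_0(\mathcal{T})=\Theta(n)$ (note that $\mathcal{H}_0$ is only asymptotically $2$, not exactly $2$, but this is immaterial), and $r$ equals twice the number of maximal runs of internal nodes in the co-lexicographic ordering. Where you diverge is in how that run count is obtained. The paper argues locally: it shows that every maximal run of leaves has length exactly two, by identifying the co-lex neighbours of the pair of leaves reached by $a\alpha$ and $b\alpha$ and checking that both neighbours are internal; with $(n+1)/2$ leaves this gives $(n+1)/4$ leaf runs and $r=(n+1)/2$. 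You instead argue globally, via the self-similar decomposition $p_d = I\cdot p_{d-1}\cdot p_{d-1}$ of the internal/leaf pattern, justified by the fact that the $a$-block (resp.\ $b$-block) is ordered by the co-lex order of the parents, which form a copy of $\mathcal{T}_{d-1}$; the resulting recurrence $f(d)=2f(d-1)$ yields the same value $r=2^d=(n+1)/2$. Both arguments are sound and elementary; the paper's is shorter and needs no recurrence, while yours makes the recursive structure of the co-lex ordering of a complete trie explicit, which is arguably more transparent and would adapt more readily to other self-similar families.
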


\begin{proof}
Consider the complete balanced binary tries $\mathcal{T}$ of $n > 1$ nodes over the binary alphabet $\Sigma = \{a, b\}$.
We observe that if $h$ is the height of $\mathcal{T}$, then the set of strings spelled in $\mathcal{T}$, considering also those reaching internal nodes, is $\bigcup_{i=0}^h \Sigma^i$ and each of these strings reaches a single node.
In this case, we have $n_1 = n_2 = (n - 1)/2$ since half edges are labeled by an $a$ and the remaining by $b$.
Consequently, by Definition~\ref{def: 0-entropy}, we have that $n\mathcal{H}_0(\mathcal{T}) = (n-1)\log(\frac{2n}{n-1}) + (n+1)\log(\frac{2n}{n+1}) \leq 2n$ and more precisely $n\mathcal{H}_0(\mathcal{T}) \approx 2n$.
Therefore, to prove the proposition, we show that $r = \Theta(n)$.
Consider the nodes $u_1,\ldots, u_n$ in co-lexicographic order, we have that $out(u_i)=\{a,b\}$ if $u_i$ is an internal node, and obviously $out(u_i)=\emptyset$ if $u_i$ is a leaf.
Since $n > 1$, node $u_1$ is the root and internal, while $u_n$ is necessarily the leaf reached by the string $b^{h}$.
Consequently, we have that $r$ is exactly twice the numbers of leaf runs in $\BWT(\mathcal{T})$.
We now show that every maximal run of leaves contains exactly two nodes.
Consider the leaf $u_i$ reached by the string $a\alpha$ with $\alpha\in\Sigma^{h-1}$, clearly the next node $u_{i+1}$ is the leaf reached by the string $b\alpha$.
We can also observe that the node $u_{i-1}$ is necessarily the one reached by $\alpha$, which is internal since $\alpha \in \Sigma^{h-1}$. 
The string reaching node $u_{i+2}$ is $b\alpha[j+1..h-1]$ where $j$ is the position of the leftmost $a$ in $\alpha$.
Note that if $j$ does not exist then $\alpha=b^{h-1}$ and therefore $u_{i+1}$ is $u_n$, otherwise $b\alpha[j+1..h-1]$ has length at most $h-1$ and therefore $u_{i+2}$ is an internal node.
Since the number of leaves is $(n + 1)/2$, the number of leaf runs is $(n + 1)/4$ and therefore $r=\Theta(n)$.
\end{proof}

Moreover, both indices can retrieve the co-lexicographic index of the child node reached by a label $c \in \Sigma$ by executing a single forward step.
This can be used to perform prefix searches.
The $r$-index also supports the navigation query of retrieving the $i$-th child of any $j$-th co-lexicographic node in $O(\log^2 \sigma)$ time, while our representation can trivially support this query in $O(\sigma)$ time by finding the $i$-th smallest $c$ such that $B_c[j]=1$ and then executing a forward step.
We leave as an open problem the possibility of improving the time complexity for this operation, which is important to visit the trie in the compressed representation.


\begin{thebibliography}{10}

\bibitem{AC75}
Alfred~V. Aho and Margaret~J. Corasick.
\newblock Efficient string matching: an aid to bibliographic search.
\newblock {\em Commun. ACM}, 18(6):333–340, June 1975.
\newblock \href {https://doi.org/10.1145/360825.360855}
  {\path{doi:10.1145/360825.360855}}.

\bibitem{ArroyueloDynamicTries}
Diego Arroyuelo, Pooya Davoodi, and Srinivasa~Rao Satti.
\newblock Succinct dynamic cardinal trees.
\newblock {\em Algorithmica}, 74(2):742–777, 2016.
\newblock \href {https://doi.org/10.1007/s00453-015-9969-x}
  {\path{doi:10.1007/s00453-015-9969-x}}.

\bibitem{Belazzougui2010}
Djamal Belazzougui.
\newblock Succinct dictionary matching with no slowdown.
\newblock In Amihood Amir and Laxmi Parida, editors, {\em Combinatorial Pattern
  Matching, 21st Annual Symposium, {CPM} 2010, New York, NY, USA, June 21-23,
  2010. Proceedings}, volume 6129 of {\em Lecture Notes in Computer Science},
  pages 88--100. Springer, 2010.
\newblock \href {https://doi.org/10.1007/978-3-642-13509-5\_9}
  {\path{doi:10.1007/978-3-642-13509-5\_9}}.

\bibitem{RepresentingTrees}
David Benoit, Erik~D. Demaine, J.~Ian Munro, Rajeev Raman, Venkatesh Raman, and
  S.~Srinivasa Rao.
\newblock Representing trees of higher degree.
\newblock {\em Algorithmica}, 43(4):275--292, 2005.
\newblock \href {https://doi.org/10.1007/S00453-004-1146-6}
  {\path{doi:10.1007/S00453-004-1146-6}}.

\bibitem{deBruijnBWT}
Alexander Bowe, Taku Onodera, Kunihiko Sadakane, and Tetsuo Shibuya.
\newblock Succinct de bruijn graphs.
\newblock In Benjamin~J. Raphael and Jijun Tang, editors, {\em Algorithms in
  Bioinformatics - 12th International Workshop, {WABI} 2012, Ljubljana,
  Slovenia, September 10-12, 2012. Proceedings}, volume 7534 of {\em Lecture
  Notes in Computer Science}, pages 225--235. Springer, 2012.
\newblock \href {https://doi.org/10.1007/978-3-642-33122-0\_18}
  {\path{doi:10.1007/978-3-642-33122-0\_18}}.

\bibitem{BWT}
Michael Burrows and David Wheeler.
\newblock A block-sorting lossless data compression algorithm.
\newblock {\em SRS Research Report}, 124, 1994.

\bibitem{BWTnfas}
Nicola Cotumaccio and Nicola Prezza.
\newblock On indexing and compressing finite automata.
\newblock In D{\'{a}}niel Marx, editor, {\em Proceedings of the 2021 {ACM-SIAM}
  Symposium on Discrete Algorithms, {SODA} 2021, Virtual Conference, January 10
  - 13, 2021}, pages 2585--2599. {SIAM}, 2021.
\newblock \href {https://doi.org/10.1137/1.9781611976465.153}
  {\path{doi:10.1137/1.9781611976465.153}}.

\bibitem{elementsOfInformationTheory}
Thomas~M. Cover and Joy~A. Thomas.
\newblock {\em Elements of Information Theory}.
\newblock Wiley-Interscience, USA, 2006.

\bibitem{Bonsai}
John~J. Darragh, John~G. Cleary, and Ian~H. Witten.
\newblock Bonsai: a compact representation of trees.
\newblock {\em Softw. Pract. Exper.}, 23(3):277–291, 1993.
\newblock \href {https://doi.org/10.1002/spe.4380230305}
  {\path{doi:10.1002/spe.4380230305}}.

\bibitem{DavoodiDynamicTries}
Pooya Davoodi and Satti~Srinivasa Rao.
\newblock Succinct dynamic cardinal trees with constant time operations for
  small alphabet.
\newblock In Mitsunori Ogihara and Jun Tarui, editors, {\em Theory and
  Applications of Models of Computation}, pages 195--205, Berlin, Heidelberg,
  2011. Springer Berlin Heidelberg.
\newblock \href {https://doi.org/10.1007/978-3-642-20877-5_21}
  {\path{doi:10.1007/978-3-642-20877-5_21}}.

\bibitem{cycleLemma}
Nachum Dershowitz and Shmuel Zaks.
\newblock The cycle lemma and some applications.
\newblock {\em Eur. J. Comb.}, 11(1):35--40, 1990.
\newblock \href {https://doi.org/10.1016/S0195-6698(13)80053-4}
  {\path{doi:10.1016/S0195-6698(13)80053-4}}.

\bibitem{aProblemOfArrangements}
A.~Dvoretzky and Th. Motzkin.
\newblock {A problem of arrangements}.
\newblock {\em Duke Mathematical Journal}, 14(2):305 -- 313, 1947.
\newblock \href {https://doi.org/10.1215/S0012-7094-47-01423-3}
  {\path{doi:10.1215/S0012-7094-47-01423-3}}.

\bibitem{UniversalSuccinctTree?}
Arash Farzan, Rajeev Raman, and S.~Srinivasa Rao.
\newblock Universal succinct representations of trees?
\newblock In Susanne Albers, Alberto Marchetti{-}Spaccamela, Yossi Matias,
  Sotiris~E. Nikoletseas, and Wolfgang Thomas, editors, {\em Automata,
  Languages and Programming, 36th International Colloquium, {ICALP} 2009,
  Rhodes, Greece, July 5-12, 2009, Proceedings, Part {I}}, volume 5555 of {\em
  Lecture Notes in Computer Science}, pages 451--462. Springer, 2009.
\newblock \href {https://doi.org/10.1007/978-3-642-02927-1\_38}
  {\path{doi:10.1007/978-3-642-02927-1\_38}}.

\bibitem{xBWTconf}
Paolo Ferragina, Fabrizio Luccio, Giovanni Manzini, and S.~Muthukrishnan.
\newblock Structuring labeled trees for optimal succinctness, and beyond.
\newblock In {\em 46th Annual {IEEE} Symposium on Foundations of Computer
  Science {(FOCS} 2005), 23-25 October 2005, Pittsburgh, PA, USA, Proceedings},
  pages 184--196. {IEEE} Computer Society, 2005.
\newblock \href {https://doi.org/10.1109/SFCS.2005.69}
  {\path{doi:10.1109/SFCS.2005.69}}.

\bibitem{xBWTjournal}
Paolo Ferragina, Fabrizio Luccio, Giovanni Manzini, and S.~Muthukrishnan.
\newblock Compressing and indexing labeled trees, with applications.
\newblock {\em J. {ACM}}, 57(1):4:1--4:33, 2009.
\newblock \href {https://doi.org/10.1145/1613676.1613680}
  {\path{doi:10.1145/1613676.1613680}}.

\bibitem{FMindex}
Paolo Ferragina and Giovanni Manzini.
\newblock Opportunistic data structures with applications.
\newblock In {\em 41st Annual Symposium on Foundations of Computer Science,
  {FOCS} 2000, 12-14 November 2000, Redondo Beach, California, {USA}}, pages
  390--398. {IEEE} Computer Society, 2000.
\newblock \href {https://doi.org/10.1109/SFCS.2000.892127}
  {\path{doi:10.1109/SFCS.2000.892127}}.

\bibitem{FMI-Journal}
Paolo Ferragina and Giovanni Manzini.
\newblock Indexing compressed text.
\newblock {\em J. ACM}, 52(4):552–581, 2005.
\newblock \href {https://doi.org/10.1145/1082036.1082039}
  {\path{doi:10.1145/1082036.1082039}}.

\bibitem{FullTextIndexes}
Paolo Ferragina, Giovanni Manzini, Veli M\"{a}kinen, and Gonzalo Navarro.
\newblock Compressed representations of sequences and full-text indexes.
\newblock {\em ACM Trans. Algorithms}, 3(2):20–es, 2007.
\newblock \href {https://doi.org/10.1145/1240233.1240243}
  {\path{doi:10.1145/1240233.1240243}}.

\bibitem{analyticCombinatorics}
Philippe Flajolet and Robert Sedgewick.
\newblock {\em Analytic Combinatorics}.
\newblock Cambridge University Press, 2009.

\bibitem{WheelerGraphs}
Travis Gagie, Giovanni Manzini, and Jouni Sirén.
\newblock Wheeler graphs: A framework for bwt-based data structures.
\newblock {\em Theoretical Computer Science}, 698:67--78, 2017.
\newblock \href {https://doi.org/10.1016/j.tcs.2017.06.016}
  {\path{doi:10.1016/j.tcs.2017.06.016}}.

\bibitem{FixBlockCompressionBoostJournal}
Simon Gog, Juha K\"{a}rkk\"{a}inen, Dominik Kempa, Matthias Petri, and Simon~J.
  Puglisi.
\newblock Fixed block compression boosting in {FM}-indexes: Theory and
  practice.
\newblock {\em Algorithmica}, 81(4):1370–1391, April 2019.
\newblock \href {https://doi.org/10.1007/s00453-018-0475-9}
  {\path{doi:10.1007/s00453-018-0475-9}}.

\bibitem{concreteMathematics}
Ronald~L. Graham, Donald~E. Knuth, and Oren Patashnik.
\newblock {\em Concrete Mathematics: {A} Foundation for Computer Science, 2nd
  Ed}.
\newblock Addison-Wesley, 1994.
\newblock URL: \url{https://www-cs-faculty.stanford.edu/\%7Eknuth/gkp.html}.

\bibitem{Hon2013}
Wing{-}Kai Hon, Tsung{-}Han Ku, Rahul Shah, Sharma~V. Thankachan, and
  Jeffrey~Scott Vitter.
\newblock Faster compressed dictionary matching.
\newblock {\em Theor. Comput. Sci.}, 475:113--119, 2013.
\newblock \href {https://doi.org/10.1016/J.TCS.2012.10.050}
  {\path{doi:10.1016/J.TCS.2012.10.050}}.

\bibitem{Kosolobov2019}
Dmitry Kosolobov and Nikita Sivukhin.
\newblock Compressed multiple pattern matching.
\newblock In Nadia Pisanti and Solon~P. Pissis, editors, {\em 30th Annual
  Symposium on Combinatorial Pattern Matching, {CPM} 2019, June 18-20, 2019,
  Pisa, Italy}, volume 128 of {\em LIPIcs}, pages 13:1--13:14. Schloss Dagstuhl
  - Leibniz-Zentrum f{\"{u}}r Informatik, 2019.
\newblock \href {https://doi.org/10.4230/LIPICS.CPM.2019.13}
  {\path{doi:10.4230/LIPICS.CPM.2019.13}}.

\bibitem{implicitCompBoost}
Veli M\"{a}kinen and Gonzalo Navarro.
\newblock Implicit compression boosting with applications to self-indexing.
\newblock In {\em Proceedings of the 14th International Conference on String
  Processing and Information Retrieval}, SPIRE'07, page 229–241, Berlin,
  Heidelberg, 2007. Springer-Verlag.

\bibitem{BWTanalysis}
Giovanni Manzini.
\newblock An analysis of the burrows—wheeler transform.
\newblock {\em J. ACM}, 48(3):407–430, may 2001.
\newblock \href {https://doi.org/10.1145/382780.382782}
  {\path{doi:10.1145/382780.382782}}.

\bibitem{compactDataStructures}
Gonzalo Navarro.
\newblock {\em Compact Data Structures - {A} Practical Approach}.
\newblock Cambridge University Press, 2016.

\bibitem{FullTextIndexesSurvey}
Gonzalo Navarro and Veli M\"{a}kinen.
\newblock Compressed full-text indexes.
\newblock {\em ACM Comput. Surv.}, 39(1):2–es, 2007.
\newblock \href {https://doi.org/10.1145/1216370.1216372}
  {\path{doi:10.1145/1216370.1216372}}.

\bibitem{rindexTrie}
Nicola Prezza.
\newblock On locating paths in compressed tries.
\newblock In D{\'{a}}niel Marx, editor, {\em Proceedings of the 2021 {ACM-SIAM}
  Symposium on Discrete Algorithms, {SODA} 2021, Virtual Conference, January 10
  - 13, 2021}, pages 744--760. {SIAM}, 2021.
\newblock \href {https://doi.org/10.1137/1.9781611976465.47}
  {\path{doi:10.1137/1.9781611976465.47}}.

\bibitem{TRProdinger}
Helmut Prodinger.
\newblock Counting edges according to edge-type in $t$-ary trees, 2022.
\newblock \href {https://arxiv.org/abs/2205.13374} {\path{arXiv:2205.13374}},
  \href {https://doi.org/10.48550/arXiv.2205.13374}
  {\path{doi:10.48550/arXiv.2205.13374}}.

\bibitem{RRR}
Rajeev Raman, Venkatesh Raman, and Srinivasa~Rao Satti.
\newblock Succinct indexable dictionaries with applications to encoding k-ary
  trees, prefix sums and multisets.
\newblock {\em ACM Trans. Algorithms}, 3(4):43–es, 2007.
\newblock \href {https://doi.org/10.1145/1290672.1290680}
  {\path{doi:10.1145/1290672.1290680}}.

\bibitem{roteCountingDegrees}
G{\"u}nter Rote.
\newblock Binary trees having a given number of nodes with 0, 1, and 2
  children.
\newblock {\em S{\'e}minaire Lotharingien de Combinatoire, B38b}, 1997.
\newblock URL: \url{https://www.emis.de/journals/SLC/wpapers/s38proding.html}.

\end{thebibliography}

\end{document}